\documentclass[a4paper,UKenglish,cleveref,autoref,thm-restate,nolineno]{socg-lipics-v2021}

\usepackage{subfiles}
\usepackage[utf8]{inputenc}
\usepackage[T1]{fontenc}
\usepackage[numbers]{natbib}
\usepackage[algo2e, ruled, vlined]{algorithm2e}
\usepackage{algorithm}
\usepackage{lmodern}
\usepackage{bm}
\usepackage{array}
\usepackage{adjustbox}
\usepackage{amsfonts} 
\usepackage{amssymb}  
\usepackage{amsthm}
\usepackage{amsmath} 
\usepackage{bookmark}
\usepackage{booktabs}
\usepackage{color}    
\usepackage{graphicx} 
\usepackage{hyperref} 
\usepackage{url}
\usepackage{mathtools}
\usepackage{listings}
\usepackage{verbatim}
\usepackage{todonotes}
\usepackage{pgfplots}
\pgfplotsset{compat=1.18}
\usepackage{caption}
\usepackage{subcaption}
\usepackage{float}
\usepackage[table]{xcolor}
\usepackage{xstring}
\usepackage{numprint}
\usepackage{pdflscape}
\usepackage{datatool}
\usepackage{longtable}
\usepackage{multicol}
\usepackage{wasysym} 

\npthousandsep{ }
\npdecimalsign{.}
\makeatletter
\let\@ET@sw@cr\cr
\makeatother

\makeatletter
\newcommand{\subfilesbibliography}[1]{%
   \expandafter\ifx\csname ver@subfiles.cls\endcsname\relax
   \expandafter\@secondoftwo
   \else
       \expandafter\@firstoftwo
   \fi
   {\bibliography{#1}}
 {}%
}
\makeatother

\newcommand{\etal}{et al.}

\newcommand{\strong}{\textsc{Red}}
\newcommand{\ilpg}{\textsc{ILP$_G$}}
\newcommand{\pilpg}{\textsc{\strong-ILP$_G$}}
\newcommand{\ilp}{\textsc{ILP}}
\newcommand{\pilp}{\textsc{\strong-ILP}}
\newcommand{\satAndReduce}{\textsc{SatAndReduce}}
\newcommand{\psatAndReduce}{\textsc{\strong-satAndReduce}}
\newcommand{\weGotYouCovered}{\textsc{WeGotYouCovered}}
\newcommand{\pweGotYouCovered}{\textsc{\strong-weGotYouCovered}}
\newcommand{\struction}{\textsc{Struction}}
\newcommand{\pstruction}{\textsc{\strong-struction}}

\usepackage{tcolorbox}
\tcbuselibrary{theorems}
\newtcbtheorem[number within=section]{obs}{Observation}{
	colback=lightgray, colframe=lightgray, boxsep=0pt, left=1mm, right=1mm, top=5pt,  bottom=5pt, 	tcbox raise base, fonttitle=\bfseries, coltitle=black, coltext=black,  terminator sign={}, attach title to upper, after title={\hspace{1em}}}{obs}

  \newcommand{\TimeStyle}[1]{%
  \edef\TimeTmp{#1}%
  \ifx\TimeTmp\empty
    --%
  \else
    \pgfmathparse{#1<0 ? 1 : 0}%
    \ifnum\pgfmathresult=1
      -%
    \else
      \pgfmathparse{#1<0.01 ? 1 : 0}%
      \ifnum\pgfmathresult=1
        $<0.01$%
      \else
        \pgfmathprintnumber[
          fixed,
          precision=2,
          zerofill,
          1000 sep={\,},          
          assume math mode=false 
        ]{#1}%
      \fi
    \fi
  \fi
}
\newcommand{\CompareTimesStyle}[2]{%
  \if\relax\detokenize{#1}\relax
    -- %
  \else
    \if\relax\detokenize{#2}\relax
      \textbf{\TimeStyle{#1}} 
    \else
      \pgfmathparse{#1<#2 ? 1 : 0}%
      \ifnum\pgfmathresult=1
        \boldmath{\TimeStyle{#1}} 
      \else
        \TimeStyle{#1} 
      \fi
    \fi
  \fi
}

\usepackage{pgf}
\usepackage{pgfplots}
\usepackage{pgfplotstable}
\usepackage{tikz} 

\usepgfplotslibrary{colorbrewer}
\pgfdeclarelayer{foreground}
\pgfsetlayers{main,foreground} 
\usepgfplotslibrary{statistics}
\usepgfplotslibrary{groupplots}
\usetikzlibrary{arrows.meta, positioning, tikzmark, calc, bending, shadows, shapes,patterns}
\pgfplotsset{compat=1.18}

\pgfplotstableset{
  timeStyle/.style={
	fixed, 
	precision=2, 
	1000 sep={\,},
	zerofill,
    string type=false,
	postproc cell content/.append code={
		\pgfmathparse{##1<0 ? 1 : 0}%
		\ifnum\pgfmathresult=1
			\pgfkeysalso{@cell content={-}}%
		\else
			\pgfmathparse{##1<0.01 ? 1 : 0}%
			\ifnum\pgfmathresult=1
				\pgfkeysalso{@cell content={$<0.01$}}%
			\else
				\pgfkeysalso{@cell content={\pgfmathprintnumber[fixed,precision=2]{##1}}}%
			\fi
		\fi
	}
  }
}

\pgfplotsset{
    cycle list/.define={my marks}{
        every mark/.append style={solid,fill=\pgfkeysvalueof{/pgfplots/mark list fill}},mark=*\\
        every mark/.append style={solid,fill=\pgfkeysvalueof{/pgfplots/mark list fill}},mark=square*\\
        every mark/.append style={solid,fill=\pgfkeysvalueof{/pgfplots/mark list fill}},mark=triangle*\\
        every mark/.append style={solid,fill=\pgfkeysvalueof{/pgfplots/mark list fill}},mark=halfcircle*\\
        every mark/.append style={solid,fill=\pgfkeysvalueof{/pgfplots/mark list fill}, mark options={rotate=90}, scale=1.25}, mark=halfsquare right*\\
        every mark/.append style={solid,fill=\pgfkeysvalueof{/pgfplots/mark list fill},rotate=180},mark=triangle*\\
        every mark/.append style={solid,fill=\pgfkeysvalueof{/pgfplots/mark list fill}, scale=1.25},mark=10-pointed star\\
        every mark/.append style={solid,fill=\pgfkeysvalueof{/pgfplots/mark list fill}, mark options={rotate=90}, scale=1.25}, mark=halfsquare left*\\
        every mark/.append style={solid,fill=\pgfkeysvalueof{/pgfplots/mark list fill},rotate=90},mark=triangle*\\
    },
	cycle list/Dark2,
    mark list fill={.!75!white},
    cycle multiindex* list={Dark2 \nextlist my marks \nextlist [3 of]linestyles \nextlist thick \nextlist},
	every axis/.style={
        axis lines=middle, 
        axis x line*=bottom,
        axis y line*=left,
		y axis line style={-},
		x axis line style={>=Stealth},
		width=0.44\textwidth,
		height=0.4\textwidth,
		xlabel={$\tau$},
		ylabel={Fraction of instances},
    	ytick={0,0.2,0.4,0.6,0.8,1}, 
    	ymin=0,
		scale only axis,
		xlabel style={at={(1,0)},anchor=west},
		ylabel near ticks
	},
	myLegend/.style={
		legend style={
			anchor=center,
			cells={anchor=west},
			draw=none,    
			/tikz/every even column/.append style={column sep=5mm}
		}
	},
	myLegend2/.style={
		myLegend,
		legend style={legend columns=2}
	},
	myLegend3/.style={
		myLegend,
		legend style={legend columns=3}
	},
	myLegend4/.style={
		myLegend,
		legend style={legend columns=4}
	},
	myLegend5/.style={
		myLegend,
		legend style={legend columns=5}
	},
	myLegend6/.style={
		myLegend,
		legend style={legend columns=6}
	},
	myLegend7/.style={
		myLegend,
		legend style={legend columns=7}
	},
	perf_max/.style={
        ymin=0, ymax=1, 
		x dir= reverse,
		xmax=1,
        x axis line style={Stealth-},
        y axis line style={-}
    },
    perf_min/.style={
        ymin=0, ymax=1,
		xmin=1,
        x axis line style={-Stealth},
        y axis line style={-}
    },
    perf_left/.style={
		width=0.22\textwidth,
        ylabel near ticks,
        x axis line style={-},
		xlabel={}
    },
    perf_mid/.style={
		width=0.22\textwidth,
        ytick={},
        yticklabels={},
		ylabel={},
        x axis line style={-},
		xlabel={}
    },
    perf_right/.style={
		width=0.22\textwidth,
        ytick={},
        yticklabels={},
		ylabel={},
        xlabel={$\tau$}
	},
	red2pack/.style={
        ymin=0, ymax=100,
		height=0.2\textwidth,
		ylabel={$foi$},
        ytick={0,20,40,60,80,100},
        yticklabels={0,0.2,0.4,0.6,0.8,1},
		scale only axis,
		ylabel style={at={(0,1.1)},anchor=center,rotate=-90},
		xlabel style={at={(1,0)},anchor=west}
	},
	oneofthree/.style={
		red2pack,
		width=0.275\textwidth
	},
	twoofthree/.style={
		red2pack,
		xshift=0.3625\textwidth,
		width=0.275\textwidth
	},
	threeofthree/.style={
		red2pack,
		xshift=0.725\textwidth,
		width=0.275\textwidth
	},
	single_plot/.style={
      width=0.6\textwidth,
      height=0.3\textwidth,
      legend style={at={(0.5,1.25)},anchor=north},
	}
}

\tikzset{
	graph/.style={draw, fill=remainingGColor, rounded corners=4.5mm, inner sep=2.5mm, align=center, minimum width=2.5cm},	
	hypergraph/.style={ black, fill=remainingGColor},
	node/.style={circle, draw, fill=black, inner sep=0pt, minimum size=6pt},
	lnode/.style={node, label=left:#1},
	rnode/.style={node, label=right:#1},
	tnode/.style={node, label=above:#1},
	bnode/.style={node, label=below:#1},
	nodeR/.style={ node, draw=lightgray, fill=lightgray},
	nodeE/.style={ node, draw=redgray, fill=redgray, fill opacity=0.8},
	nodeI/.style={ node, draw=greengray, fill=greengray, fill opacity=0.8},
	weight/.style={node, rectangle, fill opacity=0, draw opacity=0, text opacity = 1,text=weightColor, yshift=0.4cm, font=\small},
	lweight/.style={node, rectangle, fill opacity=0, draw opacity=0, text opacity = 1,text=weightColor, xshift=-0.5cm, font=\small},
	rweight/.style={node, rectangle, fill opacity=0, draw opacity=0, text opacity = 1,text=weightColor, xshift=0.5cm, font=\small},
	weightR/.style={node, rectangle, fill opacity=0, draw opacity=0, text opacity = 1,text=redgray, yshift=0.5cm, font=\small},
	weightText/.style={node, rectangle, fill opacity=1, text opacity = 1,text=weightColor, fill=white, draw=white , font=\small},
	edge/.style={draw=black, fill=black, thick},
	edgeR/.style={ edge, draw=lightgray, fill=lightgray, dashed},
	matching/.style={thick, draw=black, fill=green, fill opacity=0.4},
	Rmatching/.style={thick, draw=none, fill=lightgray, fill opacity=0.6},	
	Imatching/.style={thick, draw=none, fill=green, fill opacity=0.4},	
	Ematching/.style={thick, draw=none, fill=redgray, fill opacity=0.5}
}

\newtheorem{reduction}{Reduction}

\hideLIPIcs  

\title{Data Reductions for the Strong Maximum Independent Set Problem in Hypergraphs}

\author{Ernestine Gro{\ss}mann\footnote{corresponding author}}{Heidelberg University,  Faculty of Mathematics and Computer Science, Germany}{e.grossmann@informatik.uni-heidelberg.de}{https://orcid.org/0000-0002-9678-0253}{}
\author{Christian Schulz}{Heidelberg University, Faculty of Mathematics and Computer Science, Germany}{christian.schulz@informatik.uni-heidelberg.de}{https://orcid.org/0000-0002-2823-3506}{}
\author{Darren Strash}{Hamilton College, Department of Computer Science, USA}{dstrash@hamilton.edu}{0000-0001-7095-8749}{}
\author{Antonie Wagner}{Heidelberg University, Faculty of Mathematics and Computer Science, Germany}{antonie.wagner@stud.uni-heidelberg.de}{https://orcid.org/}{}

\authorrunning{E. Gro{\ss}mann, C. Schulz, D. Strash, A. Wagner} 

\Copyright{Ernestine Gro{\ss}mann, Christian Schulz, Darren Strash, and Antonie Wagner} 

\ccsdesc[500]{Mathematics of computing~Hypergraphs}
\ccsdesc[500]{Mathematics of computing~Graph algorithms}
\ccsdesc[300]{Mathematics of computing~Combinatorial optimization}

\keywords{Hypergraph, Maximum Independent Set, Data Reduction, Algorithm Engineering} 

\category{} 

\supplement{Our code is available on \url{https://github.com/KarlsruheMIS/HyperMIS}~\cite{source-code}}

\funding{This work was supported by the DFG grant SCHU 2567/8-1.}

\EventEditors{John Q. Open and Joan R. Access}
\EventNoEds{2}
\EventLongTitle{42nd Conference on Very Important Topics (CVIT 2016)}
\EventShortTitle{CVIT 2016}
\EventAcronym{CVIT}
\EventYear{2016}
\EventDate{December 24--27, 2016}
\EventLocation{Little Whinging, United Kingdom}
\EventLogo{}
\SeriesVolume{42}
\ArticleNo{23}

\begin{document}

\maketitle

\begin{abstract}
    This work addresses the well-known Maximum Independent Set problem in the context of hypergraphs.
    While this problem has been extensively studied on graphs, we focus on its strong extension to hypergraphs, where edges may connect any number of vertices. A set of vertices in a hypergraph is strongly independent if there is at most one vertex per edge in the set. One application for this problem is to find perfect minimal hash functions.
    We propose nine new data reduction rules specifically designed for this problem.
    Our reduction routine can serve as a preprocessing step for any solver.
    We analyze the impact on the size of the reduced instances and the performance of several subsequent solvers when combined with this preprocessing.
    Our results demonstrate a significant reduction in instance size and improvements in running time for subsequent solvers.
    The preprocessing routine reduces instances, on average, to $22\,\%$ of their original size in 6.76 seconds. When combining our reduction preprocessing with the best-performing exact solver, we observe an average speedup of 3.84x over not using the reduction rules. In some cases, we can achieve speedups of up to 53x.
    Additionally, one more instance becomes solvable by a method when combined \hbox{with our preprocessing.}
\end{abstract}
\newpage

\section{Introduction}
\label{sec:intro}

The Maximum Independent Set (MIS) problem is a fundamental NP-hard problem in combinatorial optimization.
Given a graph \(G = (V,E)\), a maximum independent set \(\mathcal{I} \subseteq V\) is a set of pairwise non-adjacent vertices with maximum cardinality.
This problem has a wide range of applications across various domains.
In cartography, it plays a crucial role in dynamic map labeling \cite{Gemsa16}, which involves determining optimal placements for labels associated with geographic features.
In computational biology, MIS is utilized in DNA computing and nanotechnology applications \cite{ma23,Yu10}, where independent sets help model molecular self-assembly and sequence selection.
Additionally, it serves as a powerful analytical tool in network science, particularly in the study of social interactions and influence within multilayer social networks \cite{Khomami24}.

While the MIS problem is traditionally studied in the context of graphs, many real-world problems exhibit higher-order relationships that cannot be captured by pairwise interactions alone.
In such cases, hypergraphs provide a more expressive framework, allowing edges to connect multiple vertices simultaneously.
Extending the MIS problem to hypergraphs yields the task of finding the largest subset of vertices such that no two selected vertices belong to the same hyperedge.

We highlight the following application of this problem which is the construction of perfect (minimal) hash functions. The goal is to assign $n$ keys a unique value between $0$ and $n-1$. This process typically works in two steps. First, all keys are grouped into $k$ buckets using an initial hash function. Second, for each of these buckets, a secondary hash function must be chosen that maps the keys in that bucket to specific output values between $0$ and $n-1$. The challenge is to choose secondary hash functions for each bucket so that no two buckets map keys to the same output value. This problem can be modeled using a hypergraph. Vertices correspond to a (bucket, hash function) pair, and edges connect all vertices in the same bucket, as well as vertices that would cause a collision, i.e., they are output elements from two buckets mapped to the same output value. If we can find an MIS of size $k$ in this hypergraph, then there is a perfect hash function. Each solution vertex specifies which hash function to use for each bucket.

A key strategy for tackling complex combinatorial optimization problems is the use of data reduction rules.
These techniques aim to simplify problem instances by removing redundant structures and reducing the input size while preserving the equivalence of the solution space.
Reduction methods have proven to be successful for the MIS problem in graphs, particularly in the context of exact algorithms and preprocessing for heuristics \cite{Bourgeois12,ChangLZ17,LammSSSW17,Strash16,TarjanT77}.
By identifying and eliminating fast solvable components, these approaches can significantly enhance the performance of solving methods, making them an essential tool for handling large-scale instances.
This success motivates the use of data reduction techniques as a preprocessing phase for solving the MIS problem in hypergraphs.

\textbf{Our Contributions.}
In this work, we introduce nine novel exact data reduction rules for the strong Maximum Independent Set problem in hypergraphs. With those reductions, we design a preprocessing routine to enhance the solvability of the problem by reducing instance size before applying subsequent solvers.
To evaluate the effectiveness of our approach, we present an extensive experimental evaluation. We analyze the reduction in instance size and the impact on the performance of subsequent solvers.
On average, our reduction routine can reduce our dataset to 22\,\% of its original size. For subsequent solvers this results in average speedups of 1.9x, while on single instances, we observe improvements in running time of up to a factor of 53x.
\subfilesbibliography{../bib}

\section{Preliminaries}
\label{sec:preliminaries}

An \textit{undirected hypergraph} $H = (V,E)$ is defined as a set of \(n\) vertices \(V\) and a set of \(m\) hyperedges \(E\), also referred to as edges.
Each edge \(e \in E\) is a subset of the vertex set \(V\), i.e. \(E \subseteq 2^V\).
We consider hyperedges as sets rather than multisets, allowing multiple edges to include the same set of vertices, while a vertex can appear in a hyperedge only \textit{once}.
The set of vertices of a hyperedge $e$ is described with \(e\) and \(|e|\) is the edge size.
A vertex is \textit{incident} to an edge \(e\) iff \(v \in e\). The set of edges incident to a vertex \(v\) is denoted by \mbox{\(E(v) = \{e \in E \mid v \in e\}\).}
The degree of a vertex \(v\) is \mbox{\(d(v) = |E(v)|\)}.
We additionally define the size $|H|$ of a hypergraph $H=(V,E)$ as the sum over all edge sizes, i.e. $|H| = \sum_{e \in E}|e|$.
A hypergraph is \textit{r-uniform} if all edges have size $r$ and it is \textit{d-regular} if all vertices have degree $d$.
Two vertices \(u\) and \(v\) are \textit{adjacent} if both are incident to the same edge, i.e. \mbox{\(E(u) \cap E(v) \neq \emptyset\).}
The \text{neighborhood} \mbox{\(N(v):= \{u \in V \mid \exists e \in E : \{u, v\} \subseteq e\)\}} of a vertex $v$ is the set of vertices that are adjacent to \(v\).
More generally, for a subset of vertices \(S \subseteq V\) the neighborhood is extended to \mbox{\(N(S) = \bigcup_{v \in S} N(v) \setminus S\).}
The same applies for the \textit{closed} neighborhood \mbox{\(N[v] = N(v) \cup \{v\}\)} and its extension \mbox{\(N[S] = N(S) \cup S\)}.
Given a subset \(V' \subset V\), the \textit{subhypergraph} \(H_{V'}\) is defined as \mbox{\(H_{V'} := (V', \{e \cap V' \mid e \in E : e \cap V' \neq \emptyset\})\)}.
If the subhypergraph contains only those hyperedges that are entirely contained within $V'$, i.e. \mbox{\(E' = \{e \in E \mid e \subseteq V'\}\)}, we refer to it as an \textit{induced subhypergraph} and denote it by \(H[V']\).
The removal of a vertex \(v \in V\) is denoted as \mbox{\(H' = H - v\)} instead of explicitly writing \mbox{\(H' = (V \setminus \{v\}, \{e \setminus \{v\} \mid e \in E, e \setminus \{v\} \neq \emptyset\} )\)}.
The same notation applies for subsets \(X \subseteq V\) with \mbox{\(H' = H - X\)} instead of \mbox{\(H' = (V \setminus X, \{e \setminus X \mid e \in E, e \setminus X \neq \emptyset\})\)}.
A \textit{clique} is a set \(C \subseteq V\) such that all vertices are pairwise adjacent. The trivial case of a clique in a hypergraph \mbox{is a hyperedge.}
An \textit{undirected graph} \(G = (V, E)\), with \(n = |V|\) and \(m = |E|\) is a 2-uniform hypergraph with \(E \subseteq \binom{V}{2}\).
A common method to transform an undirected hypergraph \(H = (V, E)\) into an undirected graph \(G = (V, E')\) is the \textit{clique expansion}.
It replaces the hyperedges of the hypergraph with cliques in the graph.
That means for each hyperedge \(e \in E\) and for every pair of vertices \(u,v \in e\), an edge \(\{u,v\}\) is inserted to the edge set \(E'\) of the graph.

A \textit{k-independent set} in a hypergraph is a set \(\mathcal{I} \subseteq V\) such that \mbox{\(|\mathcal{I} \cap e| < k\)} for all \(e \in E\).
In case no parameter $k$ is given it is called a \textit{weak} independent set and no hyperedge is fully contained in \(\mathcal{I}\), i.e. \mbox{\(\mathcal{I} \cap e \neq e\)} for all edges \mbox{\(e \in E\)}.
This work focuses only on 2-independent sets, commonly referred to as \textit{strong} independent sets.
In the context of graphs, there is no distinction between a weak and a strong independent set. 
For simplicity, we refer to strong independent sets as independent sets (IS) throughout this paper.
With IS($H$) we denote the set of all independent sets of \(H\).
An independent set \(\mathcal{I}\) is \textit{maximal} if there exists no other independent set \(\mathcal{I'}\) with \(\mathcal{I} \subsetneq \mathcal{I'}\),
a \textit{maximum independent set} (MIS) \(\mathcal{I} \in \) IS($H$) is an independent set with maximum cardinality.
The size of a maximum independent set is the \textit{independence number} of $H$ and is denoted by \(\alpha(H)\).
The \textit{Maximum Independent Set problem} considered here asks for a maximum independent set in a hypergraph.

A \textit{hitting set} is a subset \(S \subseteq V\) that intersects ("hits") every edge in \(H\),
i.e. \mbox{\(e \cap S \neq \emptyset\)} for all \(e \in E\).
In the context of graphs, this is known as a \textit{vertex cover}.
A \textit{d-hitting set} ensures that the size of every hyperedge $e$ is upper-bounded by a fixed number $d$.
The \textit{Hitting Set problem} asks for a minimum hitting set in a hypergraph.
It is closely related to the MIS problem on hypergraphs when considering \textit{weak} independent sets.
Given a minimum hitting set $S$, its complement \(V \setminus S\) forms a weak MIS in $H$.
This complementary relationship does not hold for the \textit{strong} MIS.
Nevertheless, we show that many reduction rules for the Hitting Set problem can still be applied \mbox{to the MIS problem.}

\section{Related Work}
\label{sec:related_work}

This paper is a summary and extension of the thesis~\cite{ba_wagner}.
We start with covering related work on the extensively studied Maximum Independent Set problem in graphs, while focusing on recent work that incorporates reduction strategies.
Since the Maximum Independent Set problem in graphs is very closely related to the Minimum Vertex Cover problem, we will also cover related work in this area.
Afterwards, we present work on the Independent Set and Hitting Set problem in hypergraphs.

\textbf{Exact Solvers for Graphs.}
Since computing a maximum independent set is NP-hard, the best-known algorithms have worst-case exponential time complexity with respect to the number of vertices and follow the \textit{branch-and-bound} paradigm.
The first non-trivial, theoretical branch-and-bound algorithm, designed by Tarjan and Trojanowski in 1977 \cite{TarjanT77}, runs in \(O(2^\frac{n}{3})\) time and requires polynomial space.
Since then, much research has focused on reducing the base of the exponent~\cite{Bourgeois12,FominGK09,XiaoN17}.
Data reduction rules proved useful in \textit{branch-and-reduce} methods, often reducing the problem size to smaller instances that can be solved exactly.
The branch-and-reduce paradigm applies reduction rules to decrease the input size and uses a branching strategy when no further reduction rules are possible.
Over time, various reduction rules combined with branch-and-bound methods have been developed for the MIS problem and the closely related Vertex Cover problem~\cite{AkibaIwata,Butenko2002,Butenko07,Chen01}.
In 2019, one track in the PACE challenge investigated the Minimum Vertex Cover problem. This track was won by a portfolio approach \weGotYouCovered~\cite{hespe2020wegotyoucovered}.
If not considering portfolio approaches, the best-performing method was \satAndReduce~\cite{plachetta2021sat}, a branch-and-reduce solver that incorporates SAT solving to achieve further speedups.
The current state-of-the-art branch-and-reduce solver for the weighted MIS problem is KaMIS, developed by Lamm~\etal~\cite{KaMIS}.
It is designed to efficiently solve both the weighted and unweighted variants \hbox{of the problem.}
In recent years, more and more work has been done on data reduction rules. A broad overview of how data reduction rules are used for different problems can be found in the survey by Abu-Khzam~\etal~\cite{abu2022recent}. A more problem-specific overview of data reduction rules for the Maximum Weight Independent Set problem can be found in the survey \hbox{by Großmann~\etal~\cite{grossmann2024comprehensive}.}

\textbf{Independent Sets in Hypergraphs.}
For weak independent sets in hypergraphs, a significant amount of research has focused on their theoretical approximability for specific classes of hypergraphs~\cite{AlonAA99,Arras24,BaloghBN21,CohenPST22,HansenL76,HofmeisterL98,QiuW24}.
For the Strong Maximum Independent Set problem in hypergraphs, Halld{\'{o}}rsson and Losievskaja \cite{MSISHeuristic} proposed the two greedy algorithms \textsc{GreedyD} and \textsc{GreedyN}, and established their approximation ratios.
Both algorithms iteratively construct an MIS by selecting either the vertex of minimum degree (\textsc{GreedyD}) or the vertex with the fewest neighbors (\textsc{GreedyN}).

\textbf{$d$-Hitting Sets in Hypergraphs.}
For the $d$-Hitting Set problem, Weihe~\cite{Weihe} first introduced different reduction techniques.
Abu-Khzam \cite{ABUKHZAM2010524} presents a reduction algorithm for the 3-Hitting Set problem and a general reduction for the $d$-Hitting Set problem.
Erd\H{o}s and Rado \cite{SunflowerLemma} introduced in their work about the \(\Delta\)-System Theorem, the concept of sunflowers.
Since then, they have been used for polynomial-time data reduction for the $d$-Hitting Set problem.
Van Bevern \cite{Bevern14} proposes a linear-time reduction algorithm that finds sunflowers to transform a $d$-Hitting Set instance into an equivalent instance with \(O(k^d)\) hyperedges and vertices.
To transfer the problem into the dynamic setting, Bannach et al. \cite{BannachHRT21} introduce \textit{$b$-flowers} as a generalization of sunflowers, since they are easier to identify.

\subfilesbibliography{../bib}

\section{Methods for the MIS Problem on Hypergraphs}
\label{sec:methods}

To solve the reduced hypergraphs, we first introduce an exact approach using an \textit{integer linear program} described in Section~\ref{sec:optimal_solutions}.
Afterwards, we present our new exact data reduction rules (Section~\ref{sec:data_reductions}) for the MIS problem on hypergraphs. These are used in our preprocessing routine to obtain reduced instances.
\subfilesbibliography{../bib}

\subsection{Optimal Solutions}
\label{sec:optimal_solutions}

A \textit{linear program} (LP) is an optimization technique that seeks to maximize or minimize an objective function subject to a system of linear constraints.
An \textit{integer linear program} (ILP) is an LP with the additional constraint that all variables take integer values. 

\textbf{The ILP Formulation.}
One can state the MIS problem on hypergraphs as an integer linear program as follows:
\begin{align}
 \text{maximize} \quad & \sum_{v \in V} x_v \label{eq:objective} \\
 \text{subject to} \quad & \sum_{i \in e} x_i \leq 1, \quad \forall e \in E, \label{eq:constraint1} \\
    & x_v \in \{0, 1\}, \quad \forall v \in V. \label{eq:constraint2}
\end{align}
The decision variables $x_v$ indicate whether a vertex $v$ is included in the MIS \mbox{(\(x_v = 1\))} or excluded \mbox{(\(x_v = 0\)).}
The constraints \eqref{eq:constraint1} ensure the independence property by guaranteeing that, for every edge $e$ in the hypergraph, at most one vertex in $e$ can be included.
Maximizing this objective function corresponds to finding a maximum independent set.

\textbf{The Dual Problem.}
The dual problem to the Maximum Independent Set is the Minimum Edge Cover problem. It can be stated as an integer linear program as follows:

\begin{align}
 \text{minimize} \quad & \sum_{e \in E} y_e \label{eq:dual_objective} \\
 \text{subject to} \quad & \sum_{e: v \in e} y_e \geq 1, \quad \forall v \in V, \label{eq:dual_constraint1} \\
    & y_e \in \{0, 1\}, \quad \forall e \in E. \label{eq:dual_constraint2}
\end{align}

In this formulation, the decision variables $y_e$ indicate wheter an edge is included in the edge cover. The constraint \eqref{eq:dual_constraint1} ensures that every vertex is covered by at least one edge.
As it is the dual problem, a solution to the Minimum Edge Cover problem gives an upper bound on the solution to the Maximum Independent Set problem.

\textbf{Clique Expansion.}
To optimally solve this problem we can also apply the clique expansion to the hypergraph and then utilize existing solvers working on graphs. Given a hypergraph $H=(V,E)$, this transformation works as follows. The set of vertices $V$ stays the same and we start with an empty edge set $E'=\emptyset$. Then, each hyperedge $e\in E$ is expanded into a clique, i.e., for all vertices $u\neq v\in e$ we add an edge $\{u,v\} \cup E'$. An MIS in the resulting graph $G=(V,E')$ corresponds to the strong maximum independent set in $H$~\cite{MSISHeuristic}.

\subsection{Data Reductions}
\label{sec:data_reductions}

The goal of reduction preprocessing is to reduce the hypergraph to a smaller, equivalent instance, allowing the following algorithms to solve the problem more efficiently. This reduction process maintains the optimality of the solutions. That means that if a reduced instance is solved optimally, we can use this solution to construct an optimal solution for the original instance.
Starting from a hypergraph $H$, the presented reduction rules include information on how to construct the reduced hypergraph $H'$. Additionally, the rules contain information on the relationship between the cardinality of an MIS on the reduced hypergraph \(\alpha(H')\) and the cardinality of an MIS on the original hypergraph \(\alpha(H)\), as well as how to lift a solution \(\mathcal{I'}\) from the reduced instance to a solution \(\mathcal{I}\) in the original hypergraph.

In general, there are three ways in which our reduction rules can modify the hypergraph: by including a vertex, excluding a vertex, or removing a hyperedge.
If a vertex $v$ is included in the MIS, all its neighbors are no longer viable, and \(N[v]\) is removed from the hypergraph.
If a reduction rule excludes a vertex $v$, i.e., $v$ cannot be part of any MIS, $v$ is simply removed from the hypergraph.
The removal of vertices can lead to empty hyperedges, which are also removed.
Although edge reductions do not explicitly determine whether a vertex $v$ belongs to the MIS, they significantly enhance the effectiveness of vertex reductions.
Removing hyperedges reduces the structural complexity of the instance and decreases vertex degrees. We start by introducing two edge-reduction rules, followed by vertex-reduction rules.
\begin{reduction} [Size-One Edges] \label{edgeonereduction}
	Let \(e \in E\) with \(|e| = 1\), then we remove \(e\) from $H$. We obtain \(H' = (V, E \setminus \{e\})\), while \(\alpha(H') = \alpha(H)\) and \(\mathcal{I} = \mathcal{I'}\).
\end{reduction}
\begin{proof}
	Size one edges do not affect the adjacency of vertices and can \hbox{therefore be removed.}
\end{proof}

\begin{reduction} [Edge Domination] \label{edgedomreduction}
	Let \mbox{\(e_1, e_2 \in E\)} with \(e_1 \subseteq e_2\), then we remove \(e_1\) from $H$, s.t. \(H' = (V, E \setminus \{e_1\})\), \(\alpha(H) = \alpha(H')\) and \(\mathcal{I} = \mathcal{I'}\).
	We say \(e_2\) \textit{dominates} \(e_1\) and call \(e_1\) the \textit{dominated} edge.
\end{reduction}
\begin{proof}
	Let \mbox{\(e_1\neq e_2 \in E\)} with \(e_1 \subseteq e_2\).
	When removing \(e_1\) from $H$, the adjacencies of any vertex \(v \in e_1\) remain unchanged, as all its neighbors \(u \in e_1\) are still contained in $e_2$.
	This ensures the equivalence of the solutions in $H$ and $H'$ after the \mbox{removal of \(e_1\).}
\end{proof}

The first two vertex reductions are similar to commonly used low-degree reductions for the MIS problem on graphs \cite{AkibaIwata,ChangLZ17,LammSSSW17}.
\begin{reduction} [Degree-Zero] \label{degree-zero}
	Let \(v \in V\) with \(d(v) = 0\), then we can include $v$ into the solution.
	This results in \(H' = H - v\), \(\mathcal{I} = \mathcal{I'} \cup \{v\}\) and \(\alpha(H) = \alpha(H')+1\).
\end{reduction}
\begin{proof}
	Let \(v \in V\) with \(d(v) = 0\).
	Since $v$ has no neighbors in $V$, it is part of \mbox{every MIS in $H$.}
\end{proof}

\begin{reduction} [Degree-One] \label{degree-one}
	Let \(v \in V\) with \(d(v) = 1\), then we can include $v$ into the solution.
	It results in \(H' = H - N[v]\), \(\mathcal{I} = \mathcal{I'} \cup \{v\}\) and \(\alpha(H) = \alpha(H')+1\).
\end{reduction}

\begin{proof}
	Let \(v \in V\) with \(d(v) = 1\) and $E(v) = \{e\}$.
	We can distinguish two cases:
	\begin{enumerate}
		\item There is an MIS \(\tilde{\mathcal{I}}\) that does contain $v$. 
		\item There is an MIS \(\tilde{\mathcal{I}}\) that does not contain $v$.
		      Since $\mathcal{I}$ is maximal, there exists a neighbor \(u \in N(v)\), such that \(u \in \tilde{\mathcal{I}}\).
		      We can construct a new independent set \mbox{\(\mathcal{I^*} = (\tilde{\mathcal{I}} \setminus \{u\}) \cup \{v\}\)} with the same cardinality by including $v$ instead of $u$.
		      Since $v$ is a degree-one neighbor of $u$, replacing $u$ with $v$ does not affect any other vertices in the independent set. 
	\end{enumerate}
	We have shown that in both cases we can find an MIS that contains $v$, therefore, it is safe to include it and we get \(H' = H - N[v]\) and \(\alpha(H) = \alpha(H') + 1\).
\end{proof}

The following reduction rule is based on the concept of twins. Two vertices $u$ and $v$ are \textit{twins} if they are not adjacent and have the same neighborhood.
The difference between twins in a hypergraph and those in a graph is that twins in a graph always have the same degree, whereas twins in a hypergraph can have different degrees and still be considered twins.
The set of twins forms a set of vertices \(T \subset V\), where either all or none of the vertices are included in an MIS.
Figure~\ref{fig:twins} illustrates examples for Reduction~\ref{twinreduction}.

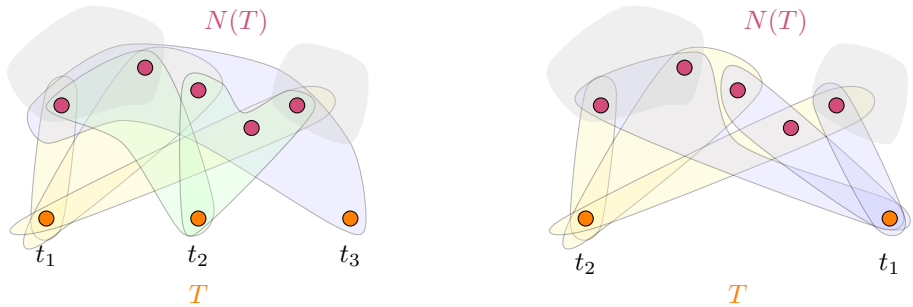
\begin{figure}[b]
    \centering
    \begin{subfigure}{0.49\textwidth} 
        \centering
        \begin{tikzpicture} [scale=1.]
            \filldraw[yellow!40, opacity=0.3, draw=black]
            plot [smooth cycle] coordinates {
                    (-1.8,-0.1) (-2.2,-0.1) (-2,1.7) (-1.6, 1.7)
                };

            \filldraw[yellow!40, opacity=0.3, draw=black]
            plot [smooth cycle] coordinates {
                    (-2, -0.3) (-2.2,0) (-0.8,2.2) (0.3, 1.7)
                };

            \filldraw[yellow!40, opacity=0.3, draw=black]
            plot [smooth cycle] coordinates {
                    (-1.8,-0.1) (-2.3,0) (1.2, 1.7) (1.6, 1.3)
                };

            \filldraw[blue!20, opacity=0.3, draw=black]
            plot [smooth cycle] coordinates {
                    (2.2,-0.1) (1.8,-0.1) (0, 1) (-1, 1.3) (-2,1) (-2.2,1.5) (-1, 2.3) (1, 2) (2, 1)
                };

            \filldraw[green!20, opacity=0.3, draw=black]
            plot [smooth cycle] coordinates {
                    (0.2, 0.2) (-0.2,-0.2) (-1,1) (-2,1.5) (-0.6,2.2)
                };

            \filldraw[green!20, opacity=0.3, draw=black]
            plot [smooth cycle] coordinates {
                    (0.2, 0) (-0.2,-0.1) (-0.2,1.8) (0.3,1.8) (0.6,1.4) (1.2,1.7) (1.5,1.4)
                };

            \filldraw[gray!40, opacity=0.3]
            plot [smooth cycle] coordinates {
                    (-1,1.5) (-2,1.3) (-2.5,2) (-1.5,2.8) (-0.5,2.5) (-0.6,1.6)
                };

            \filldraw[gray!40, opacity=0.3]
            plot [smooth cycle] coordinates {
                    (2,1) (1,1.5) (1.3,2.3) (2.2,2)
                };

            \node[draw, circle, fill=orange, inner sep=2pt] (v1) at (-2,0) {};
            \node[draw, circle, fill=orange, inner sep=2pt] (v2) at (2,0) {};
            \node[draw, circle, fill=orange, inner sep=2pt] (v8) at (0,0) {};

            \node[draw, circle, fill=purple!70, inner sep=2pt] (v3) at (-1.8,1.5) {};
            \node[draw, circle, fill=purple!70, inner sep=2pt] (v4) at (-0.7,2) {};
            \node[draw, circle, fill=purple!70, inner sep=2pt] (v5) at (0,1.7) {};
            \node[draw, circle, fill=purple!70, inner sep=2pt] (v6) at (0.7,1.2) {};
            \node[draw, circle, fill=purple!70, inner sep=2pt] (v7) at (1.3,1.5) {};

            \node at (-2, -0.5) {$t_1$};
            \node at (0,-0.5) {$t_2$};
            \node at (2, -0.5) {$t_3$};

            \node[purple!70] at (0.5,2.6) {$N(T)$};
            \node[orange] at (0,-1) {$T$};
        \end{tikzpicture}
        \label{subfig1:twins}
    \end{subfigure}
    \hfill
    \begin{subfigure}{0.49\textwidth} 
        \centering
        \begin{tikzpicture} [scale=1.]
            \filldraw[yellow!40, opacity=0.3, draw=black]
            plot [smooth cycle] coordinates {
                    (-1.8,-0.1) (-2.2,-0.1) (-2,1.7) (-1.6, 1.7)
                };

            \filldraw[yellow!40, opacity=0.3, draw=black]
            plot [smooth cycle] coordinates {
                    (-2, -0.3) (-2.2,0) (-0.8,2.2) (0.3, 1.7)
                };

            \filldraw[yellow!40, opacity=0.3, draw=black]
            plot [smooth cycle] coordinates {
                    (-1.8,-0.1) (-2.3,0) (1.2, 1.7) (1.6, 1.3)
                };

            \filldraw[blue!20, opacity=0.3, draw=black]
            plot [smooth cycle] coordinates {
                    (2.1,0.2) (1.8,-0.1) (-2.1, 1.5) (-0.5, 2.2) (0,1)
                };

            \filldraw[blue!20, opacity=0.3, draw=black]
            plot [smooth cycle] coordinates {
                    (2.1,0.2) (1.8,-0.1) (0, 1.2) (-0.2, 1.9) (0.2,1.9)
                };

            \filldraw[blue!20, opacity=0.3, draw=black]
            plot [smooth cycle] coordinates {
                    (2.2,0) (1.8,-0.1) (1, 1.5) (1.4, 1.7)
                };

            \filldraw[gray!40, opacity=0.3]
            plot [smooth cycle] coordinates {
                    (-1,1.5) (-2,1.3) (-2.5,2) (-1.5,2.8) (-0.5,2.5) (-0.6,1.6)
                };

            \filldraw[gray!40, opacity=0.3]
            plot [smooth cycle] coordinates {
                    (2,1) (1,1.5) (1.3,2.3) (2.2,2)
                };

            \node[draw, circle, fill=orange, inner sep=2pt] (v1) at (-2,0) {};
            \node[draw, circle, fill=orange, inner sep=2pt] (v2) at (2,0) {};

            \node[draw, circle, fill=purple!70, inner sep=2pt] (v3) at (-1.8,1.5) {};
            \node[draw, circle, fill=purple!70, inner sep=2pt] (v4) at (-0.7,2) {};
            \node[draw, circle, fill=purple!70, inner sep=2pt] (v5) at (0,1.7) {};
            \node[draw, circle, fill=purple!70, inner sep=2pt] (v6) at (0.7,1.2) {};
            \node[draw, circle, fill=purple!70, inner sep=2pt] (v7) at (1.3,1.5) {};

            \node at (2,-0.6) {$t_1$};
            \node at (-2,-0.6) {$t_2$};

            \node[purple!70] at (0.5,2.6) {$N(T)$};
            \node[orange] at (0,-1) {$T$};
        \end{tikzpicture}
        \label{subfig2:twins}
    \end{subfigure}
    \caption{Left: A twin set $T = \{t_1,t_2,t_3\}$ with $\Delta_T = 1$, s.t. $|T| > \Delta_T$. Thus, all vertices in $T$ can be added to the solution $\mathcal{I}$ and $N[T]$ can be removed. Right: The twin set \(T = \{t_1,t_2\}\) is of size $2$, but \(\Delta_T = 3\).  In this case, we do not add the vertices in \(T\) to the solution.}
    \label{fig:twins}
\end{figure}
\begin{reduction} [Twins] \label{twinreduction}
	Let \(T \subset V\) be a set of twins, i.e. for all \(s\neq t \in T\) holds \(N(s) = N(t)\) and s,t are not adjacent.
	Let \(\delta_T := min_{t \in T}\) $d(t)$ be the minimum degree of all twins in $T$.
	If \(|T| \geq \delta_T\), all vertices of $T$ are part of an MIS in $H$.
	We obtain \(H' = H - N[T]\), with \(\mathcal{I} = \mathcal{I'} \cup T\) and \(\alpha(H) = \alpha(H') + |T|\).
\end{reduction}

\begin{proof}
	Let \(T \subset V\) be a set of twins and $t\in T$ the twin with minimum degree $\delta_T$.
	For the subhypergraph \(H_{N[T]}\), it holds that either $T$ or some vertices from $N(T)$ are in the MIS. If a maximum independent set in $H_{N(T)}$ is smaller than or equal to $|T|$, we can include all twins in the MIS and exclude their neighbors.
	Therefore, we show that the independence number of the subhypergraph \(H_{N(T)}\) can be bounded by \(\delta_T\).

	Recall that \(H_{N(T)}\) contains all vertices in $N(T)$ and all edges incident to the vertices in $T$ while not containing the twins themselves.
	By definition, $t$ is adjacent to all of \(N(T)\), and therefore the edges incident to $t$, i.e. $E(t)$, represent a set of edges that covers $N(T)$.
	Since the Minimum Edge Cover problem is the dual to the Maximum Independent Set, this gives us an upper bound on the solution size, and we get \(\alpha(H_{N(T)}) \leq |E(t)| =  \delta_T\).

	If \(|T| \geq \delta_T\), the number of twins that can be added to the IS exceeds the number of vertices in \(N(T)\) that could possibly be included, as we have shown that \mbox{\(\alpha(H_{N(T)}) \leq \delta_T\).}
	Therefore, we include \(T\) in the IS and obtain \mbox{\(H' = H - N[T]\).}

	If \(|T| < \delta_T\), \(\alpha(H_{N(T)}) \leq |T|\) is \textit{not} assured and it might be possible that the independent set in \(H_{N(T)}\) is larger than \(|T|\).
	In this case, we do not include the vertices of \(T\).
\end{proof}

The next reduction focuses on detecting \textit{sunflower} structures in the hypergraph.
These structures originate from Erd\H{o}s and Rado~\cite{SunflowerLemma} and have been used in reduction algorithms and analysis of the $d$-Hitting Set problem~\cite{Bevern14}.
Figure \ref{fig:sunflowerreduction} illustrates an example of an \hbox{application of Reduction~\ref{sunflowerreduction}.}
\begin{definition} [Sunflower] \label{def:sunflower}
	Distinct hyperedges \(S_1,\dots,S_k\), form a $k$-sunflower if they all share the same intersection, i.e.
	$S_i \cap S_j = \bigcap\limits_{t=1}^{k}S_t$, for all $i \neq j$.
	We call \(C = \bigcap\limits_{t=1}^{k}S_t\) the core and \(S_i \setminus C\) for \(i = 1,\dots,k\) the petals of the sunflower.
\end{definition}

    \begin{figure}[b]
        \centering
        \begin{subfigure}{0.4\textwidth}
            \centering
            \begin{tikzpicture}[scale=0.5]
                \filldraw[gray!50, opacity=0.3] (-1.5,0) ellipse (3 and 1.5);
                \draw[gray!60, very thick] (-1.5,0) ellipse (3 and 1.5);
                \node at (-5,0) {\(e_1\)};
    
                \filldraw[gray!50, opacity=0.3] (0,1.5) ellipse (1.5 and 3);
                \draw[gray!60, very thick] (0,1.5) ellipse (1.5 and 3);
                \node at (0,5) {\(e_2\)};
    
                \filldraw[gray!50, opacity=0.3] (1.5,0) ellipse (3 and 1.5);
                \draw[gray!60, very thick] (1.5,0) ellipse (3 and 1.5);
                \node at (5,0) {\(e_3\)};
    
                \begin{scope}
                    \clip (-1.5,0) ellipse (3 and 1.5);
                    \clip (0,1.5) ellipse (1.5 and 3);
                    \clip (1.5,0) ellipse (3 and 1.5);
                    \fill[red!50, opacity=0.5] (-2,-2) rectangle (4,4);
                \end{scope}
    
                \node[draw, circle, fill=black, inner sep=2pt] (v1) at (0,-0.5) {};
                \node[draw, circle, fill=black, inner sep=2pt] (v2) at (0.5,0.4) {};
                \node[draw, circle, fill=black, inner sep=2pt] (v3) at (-0.5,0.4) {};
    
                \node[circle, fill=gray!60, inner sep=2pt] (v6) at (-3,0.5) {};
                \node[circle, fill=gray!60, inner sep=2pt] (v7) at (-3.5,-0.5) {};
                \node[circle, fill=gray!60, inner sep=2pt] (v8) at (0.2,4) {};
                \node[circle, fill=gray!60, inner sep=2pt] (v9) at (-0.4,3.2) {};
                \node[circle, fill=gray!60, inner sep=2pt] (v10) at (0.4,2.5) {};
                \node[circle, fill=gray!60, inner sep=2pt] (v11) at (3.5, 0.4) {};
    
                \node[above] at (v1) {$v_1$};
                \node[above] at (v2) {$v_2$};
                \node[above] at (v3) {$v_3$};
            \end{tikzpicture}
        \end{subfigure}
        \hspace{1cm}
        \begin{subfigure}{0.4\textwidth}
            \centering
            \begin{tikzpicture}[scale=0.5]
                \filldraw[gray!50, opacity=0.3] (-1.5,0) ellipse (3 and 1.5);
                \draw[gray!60, very thick] (-1.5,0) ellipse (3 and 1.5);
                \node at (-5,0) {\(e_1\)};
    
                \filldraw[gray!50, opacity=0.3] (0,1.5) ellipse (1.5 and 3);
                \draw[gray!60, very thick] (0,1.5) ellipse (1.5 and 3);
                \node at (0,5) {\(e_2\)};
    
                \filldraw[gray!50, opacity=0.3] (1.5,0) ellipse (3 and 1.5);
                \draw[gray!60, very thick] (1.5,0) ellipse (3 and 1.5);
                \node at (5,0) {\(e_3\)};
    
                \begin{scope}
                    \clip (-1.5,0) ellipse (3 and 1.5);
                    \clip (0,1.5) ellipse (1.5 and 3);
                    \clip (1.5,0) ellipse (3 and 1.5);
                    \fill[yellow!50, opacity=0.5] (-2,-2) rectangle (4,4);
                \end{scope}
    
                \node[draw, circle, fill=black, inner sep=2pt] (v1) at (0,0) {};
    
                \node[circle, fill=gray!60, inner sep=2pt] (v6) at (-3,0.5) {};
                \node[circle, fill=gray!60, inner sep=2pt] (v7) at (-3.5,-0.5) {};
                \node[circle, fill=gray!60, inner sep=2pt] (v8) at (0.2,4) {};
                \node[circle, fill=gray!60, inner sep=2pt] (v9) at (-0.4,3.2) {};
                \node[circle, fill=gray!60, inner sep=2pt] (v10) at (0.4,2.5) {};
                \node[circle, fill=gray!60, inner sep=2pt] (v11) at (3.5, 0.4) {};
    
                \node[above] at (v1) {$v_1$};
            \end{tikzpicture}
        \end{subfigure}
        \caption{A $3$-sunflower before (left) and after (right) the sunflower reduction.}
        \label{fig:sunflowerreduction}
    \end{figure}
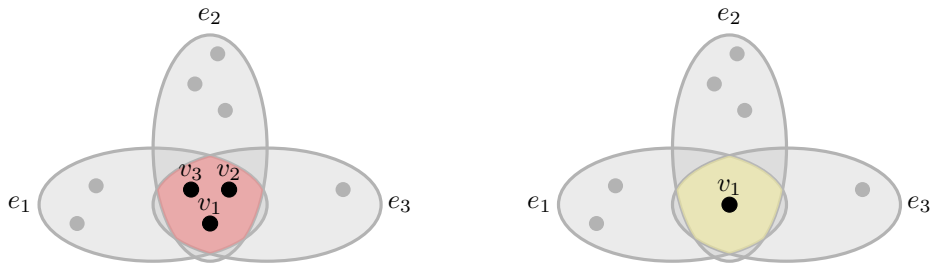

\begin{reduction} [Sunflower] \label{sunflowerreduction}
	Let \(S_1,\dots,S_k \in E\) be a $k$-sunflower in $H$, \(C\) be its core such that all vertices $v\in C$ share the same incident edges. Let $v\in V$ be an arbitrary core vertex, i.e. \(v \in C\).
	Then, we can exclude the vertices \(u \in C \setminus \{v\}\). This results in \(H' = H - (C \setminus \{v\})\), \(\mathcal{I} = \mathcal{I'}\) and \(\alpha(H) = \alpha(H')\).
\end{reduction}
\begin{proof}
	Since all vertices in the core $C$ of a sunflower share the same set of incident edges, they are adjacent to each other.
	Consequently, at most one vertex \(v \in C\) can be included in an IS.
	This vertex $v$ can be chosen arbitrarily, as the identical incidence structure ensures that all vertices in $C$ have the same neighborhood.
	Therefore, we exclude all vertices in the core of the sunflower except for $v$.
\end{proof}

\definecolor{mygreen}{cmyk}{0.8, 0, 0.8, 0}  
    \begin{figure}[t]
        \centering
    
        \begin{subfigure}{0.4\textwidth}
            \centering
            \begin{tikzpicture}[scale=0.75]
                \filldraw[rotate around={57.5:(-0.75,1.5)}, mygreen!70, opacity=0.3] (-0.75,1.5) ellipse (2.5 and 1);
                \draw[rotate around={57.5:(-0.75,1.5)}, mygreen!70, very thick] (-0.75,1.5) ellipse (2.5 and 1);
                \node[mygreen] at (-2.1,2.1) {\(e\)};
        
                \filldraw[rotate around={-57.5:(0.75,1.5)}, blue!50, opacity=0.3] (0.75,1.5) ellipse (2.5 and 1);
                \draw[rotate around={-57.5:(0.75,1.5)}, blue!60, very thick] (0.75,1.5) ellipse (2.5 and 1);
                \node[blue] at (2.1,2.1) {\(f\)};
        
                \filldraw[purple!50, opacity=0.3] (0,0.15) ellipse (2.5 and 1);
                \draw[purple!60, very thick] (0,0.15) ellipse (2.5 and 1);
                \node[purple] at (0,-1.2) {\(g\)};
                
                \filldraw[gray!40, opacity=0.3] 
                plot [smooth cycle] coordinates {
                    (1.8,-1.8) (2.5,-0.2) (-1,0) (-2,-0.3) (-2.2,-1.2) (-1.2,-2)
                };
                \filldraw[gray!40, opacity=0.3] 
                plot [smooth cycle] coordinates {
                    (3, -1) (2, -0.6) (1.5, 0) (2.5, 0.8) (3.5, 0.3) (3.3,-0.4)
                };
                \filldraw[gray!40, opacity=0.3] 
                plot [smooth cycle] coordinates {
                    (-1.2, 0.8) (-1,0) (-1.2, -1.5) (-2.5,-1.4) (-3.2, -0.5) (-2.5, 0.8)
                };
        
                \node[draw, circle, fill=blue, inner sep=2pt] (v1) at (0,3) {};
                \node[draw, circle, fill=mygreen, inner sep=2pt] (v5) at (-2,0) {};
                \node[draw, circle, fill=mygreen, inner sep=2pt] (v6) at (-1.5,0.3) {};
                \node[draw, circle, fill=mygreen, inner sep=2pt] (v7) at (-1.5,-0.3) {};
                \node[draw, circle, fill=purple, inner sep=2pt] (v8) at (2,0) {};
                \node[draw, circle, fill=purple, inner sep=2pt] (v9) at (1.5,0.3) {};
                \node[draw, circle, fill=purple, inner sep=2pt] (v10) at (1.5,-0.3) {};
        
                \node[circle, fill=gray!50, inner sep=2pt] (v14) at (-0.6,-0.5) {};
                \node[circle, fill=gray!50, inner sep=2pt] (v15) at (0.2,-0.3) {};
        
                \node[circle, fill=gray!40, inner sep=2pt] (v16) at (-3,-0.3) {};
                \node[circle, fill=gray!40, inner sep=2pt] (v17) at (-2.5,-0.8) {};
                \node[circle, fill=gray!40, inner sep=2pt] (v18) at (3,0) {};

                \node[circle, fill=gray!40, inner sep=2pt] (v20) at (-0.8,-1.4) {};
                \node[circle, fill=gray!40, inner sep=2pt] (v21) at (-1.6,-1.3) {};
                \node[circle, fill=gray!40, inner sep=2pt] (v22) at (1.7,-1) {};
                \node[circle, fill=gray!40, inner sep=2pt] (v23) at (0.9,-1.7) {};
        
                \node[above] at (v1) {$a_1$};
                \node[above] at (v5) {$b_1$};
                \node[above] at (v6) {$b_2$};
                \node[above] at (v7) {$b_3$};
                \node[above] at (v8) {$c_1$};
                \node[above] at (v9) {$c_2$};
                \node[above] at (v10) {$c_3$};
            \end{tikzpicture}
        \end{subfigure}
        \hspace{1cm}
        \begin{subfigure}{0.4\textwidth}
            \centering
            \begin{tikzpicture}[scale=0.75]
            \filldraw[rotate around={57.5:(-0.75,1.5)}, gray!50, opacity=0.3] (-0.75,1.5) ellipse (2.5 and 1);
            \draw[rotate around={57.5:(-0.75,1.5)}, gray!60, very thick] (-0.75,1.5) ellipse (2.5 and 1);
    
            \filldraw[rotate around={-57.5:(0.75,1.5)}, gray!50, opacity=0.3] (0.75,1.5) ellipse (2.5 and 1);
            \draw[rotate around={-57.5:(0.75,1.5)}, gray!60, very thick] (0.75,1.5) ellipse (2.5 and 1);
    
            \filldraw[yellow!40, opacity=0.3, draw=black] (-0.2,-0.4) ellipse (0.75 and 0.35);
            
            \filldraw[yellow!40, opacity=0.3, draw=black] 
            plot [smooth cycle] coordinates {
                (1.8,-1.8) (2,-1) (1.2,-0.6) (0,0) (-1.2,-0.6) (-2,-1.2) (-1.2,-2)
            };
            \filldraw[yellow!40, opacity=0.3, draw=black] 
            plot [smooth cycle] coordinates {
                (3, -0.6) (2.6, -0.1) (2.7, 0.4) (3.5, 0.3) (3.3,-0.4)
            };
            \filldraw[yellow!40, opacity=0.3, draw=black] 
            plot [smooth cycle] coordinates {
                (-1.3, -1.5) (-2.5,-1.4) (-3.2, -0.5) (-2.9, 0) (-2.3, -0.7) (-1.5, -0.9)
            };
    
            \node[draw, circle, fill=red, inner sep=2pt] (v1) at (0,3) {};
            \node[draw, circle, fill=gray, inner sep=2pt] (v5) at (-2,0) {};
            \node[draw, circle, fill=gray, inner sep=2pt] (v6) at (-1.5,0.3) {};
            \node[draw, circle, fill=gray, inner sep=2pt] (v7) at (-1.5,-0.3) {};
            \node[draw, circle, fill=gray, inner sep=2pt] (v8) at (2,0) {};
            \node[draw, circle, fill=gray, inner sep=2pt] (v9) at (1.5,0.3) {};
            \node[draw, circle, fill=gray, inner sep=2pt] (v10) at (1.5,-0.3) {};
    
            \node[draw, circle, fill=yellow!50, inner sep=2pt] (v14) at (-0.6,-0.5) {};
            \node[draw, circle, fill=yellow!50, inner sep=2pt] (v15) at (0.2,-0.3) {};
    
            \node[draw, circle, fill=yellow!50, inner sep=2pt] (v16) at (-3,-0.3) {};
            \node[draw, circle, fill=yellow!50, inner sep=2pt] (v17) at (-2.5,-0.8) {};
            \node[draw, circle, fill=yellow!50, inner sep=2pt] (v18) at (3,0) {};
    
            \node[draw, circle, fill=yellow!50, inner sep=2pt] (v20) at (-0.8,-1.4) {};
            \node[draw, circle, fill=yellow!50, inner sep=2pt] (v21) at (-1.6,-1.3) {};
            \node[draw, circle, fill=yellow!50, inner sep=2pt] (v22) at (1.7,-1) {};
            \node[draw, circle, fill=yellow!50, inner sep=2pt] (v23) at (0.9,-1.7) {};
    
            \node[above] at (v1) {$a_1$};
            \node[above] at (v5) {$b_1$};
            \node[above] at (v6) {$b_2$};
            \node[above] at (v7) {$b_3$};
            \node[above] at (v8) {$c_1$};
            \node[above] at (v9) {$c_2$};
            \node[above] at (v10) {$c_3$};
            \end{tikzpicture}
        \end{subfigure}
        \caption{    
        An example of Reduction~\ref{isolatedcliquereduction} with the original instance on the left, reduced to the instance on the right.
        }
        \label{fig:triangle}
    \end{figure}
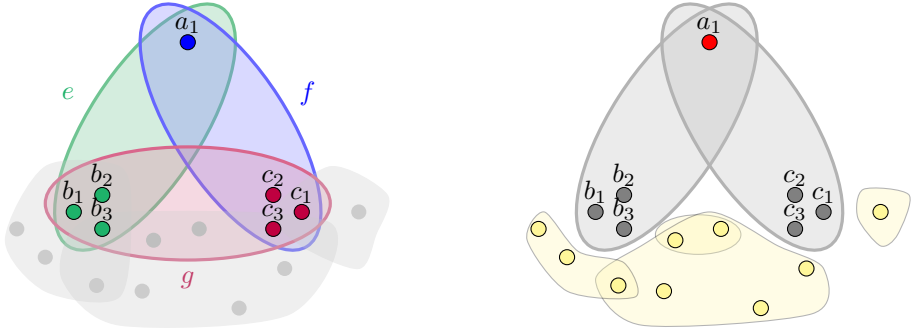 

Next, we want to generalize the \textit{simplicial vertex reduction}, which has proven highly effective as an MIS reduction for many real-world graph instances, as shown by Strash~\cite{Strash16}.
The key idea of this reduction is that in a clique, at most one vertex can belong to an independent set.
While in the context of graphs, the size of a clique \(C\) is typically defined as the number of vertices it contains, we define the clique size \(|C|\) for hypergraphs as the number of edges in the induced subhypergraph \(H[C]\).

\begin{definition} [Simplicial Vertex]
	A simplicial vertex is a vertex $v$ whose neighborhood forms a clique. Specifically, there exists a clique \(C \subseteq V\) such that \(C \cap N(v) = N(v)\).
\end{definition}
Vertices with this property can be reduced according to the following \mbox{reduction rule.}
Figure \ref{fig:triangle} shows an application of Reduction~\ref{isolatedcliquereduction}.
\begin{reduction} [Simplicial Vertex] \label{isolatedcliquereduction}
	Let \(C \subseteq V\) be a clique and let $v$ be a simplicial vertex with \(N(v) = C\), then we can include $v$.
	This results in \(H' = H - (C \cup \{v\}\)), \(\mathcal{I} = \mathcal{I'} \cup \{v\}\) and \(\alpha(H) = \alpha(H') +1\).
\end{reduction}
\begin{proof}
	Let $C\subseteq V$ be a clique in $H$ and $A$ be the set of simplicial vertices of $C$.
	Only one vertex of a clique can be contained in an independent set \(\tilde{\mathcal{I}}\) of $H$, because all vertices of the clique are adjacent to each other.
	Assume towards a contradiction that no vertex $v \in A$ belongs to \(\tilde{\mathcal{I}}\).
	We have two cases:
	\begin{enumerate}
		\item A vertex $u\in C\setminus A$, is included in the independent set.
		      Then, we can construct a new independent set \(\mathcal{I^*} = (\tilde{\mathcal{I}} \setminus \{u\}) \cup \{v\}\) that includes $v$ instead of $u$ with \hbox{the same cardinality.}
		\item No vertex of the clique belongs to \(\tilde{\mathcal{I}}\).
		      Then, a vertex $v \in A$ can safely be added to the independent set, since it has no neighbors outside of the clique.
		      We can choose $v \in A$ arbitrarily because of the identical incidence structure of all vertices in $A$.
		      By including $v$ in \(\tilde{\mathcal{I}}\), itself and all its neighbors, i.e., the vertices of the clique, are removed from the hypergraph, and it holds \(\alpha(H) = \alpha(H') +1\).
	\end{enumerate}
\end{proof}

Although this reduction theoretically applies to cliques of any size, verifying whether the neighborhood of a vertex $v$ forms a clique takes \(O(|N(v)|^2 \cdot |C|)\) time,
making the reduction more computationally expensive as the clique size grows and the degrees of its vertices increase.
For this reason, our reduction algorithm only checks for cliques of size 3.

The next reduction we introduce is Reduction~\ref{vertexdomination}. It is based on the vertex domination concept introduced by Weihe \cite{Weihe} and later used by Niedermeier and Rossmanith~\cite{NIEDERMEIER200389} as one of two preprocessing rules in a reduction algorithm for the 3-Hitting Set problem.
For the MIS problem, we apply a variant of this reduction, similar to the one described by Chang et al.~\cite{ChangLZ17} for graphs.
For two adjacent vertices $u,v\in V$ we say $u$ \textit{dominates} $v$ \hbox{if \(N[v] \subseteq N[u]\).}
\begin{reduction} [Vertex Domination] \label{vertexdomination}
	Let $u,v\in V$ be two adjacent vertices such that $u$ dominates $v$. Then, we can exclude $u$, resulting in \(H' = H - u\), $\mathcal{I}' = \mathcal{I}$ and \(\alpha(H) = \alpha(H')\).
\end{reduction}
\begin{proof}
	Let \(u,v\in V\) with \(N[v]\subseteq N[u]\) and \(\tilde{\mathcal{I}}\) be an MIS of $H$. If $u$ is not in \(\tilde{\mathcal{I}}\), we are done.
	Otherwise, suppose \(u \in \tilde{\mathcal{I}}\).
	Then, we can construct a new MIS \(\mathcal{I^*} = (\mathcal{I} \setminus \{u\}) \cup \{v\}\) with the same cardinality, since the domination property ensures that any vertex adjacent to $v$ is also adjacent to $u$.
	Thus, \(\mathcal{I^*}\) is an MIS of \mbox{$H$ excluding $u$.}
\end{proof}

The following \textit{unconfined reduction} is not restricted to a local region but can potentially extend to the entire hypergraph.
This reduction was first proposed for graphs by Xiao and Nagamochi in \cite{XiaoN13}.
The key idea is to remove a vertex $v$ if the assumption that \emph{every} maximum independent set of a graph $G$ includes $v$ leads to a contradiction.
This principle extends to the hypergraph context in the same way as it does in the graph context.
Before we explain the reduction, we define the concept of a \textit{child} in the hypergraph, which is an essential part of the reduction.

\begin{definition} [Child]
	Given an independent set $S$ of $H$, a vertex \(u \in N(S)\) is a \textit{child} of $S$ if it has exactly one neighbor \(s \in S\), i.e. \(|N(u) \cap S| = 1\). This neighbor $s$ is called the \hbox{parent of $u$.}
\end{definition}
The concept used in the unconfined reduction rule is based on the following observation described in Lemma~\ref{lem:unconfined}.
\begin{lemma}\label{lem:unconfined}
	Let $S$ be an independent set that is contained in all maximum independent sets of $H$. Then, any maximum independent set has to contain one vertex $v \in N(u) \setminus N[S]$ for each child $u\in N(S)$.
\end{lemma}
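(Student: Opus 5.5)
We argue by contradiction through an exchange, exactly as in the graph setting of Xiao and Nagamochi~\cite{XiaoN13}. The only hypergraph ingredient is that strong independence depends solely on the adjacency relation $N(\cdot)$: a set $\mathcal{I}$ is independent in $H$ if and only if no two of its vertices are adjacent. This holds because each hyperedge may contain at most one chosen vertex, and two vertices share a hyperedge precisely when they are adjacent. Hence all exchange arguments can be phrased purely in terms of neighborhoods, as in the proofs of Reductions~\ref{degree-one} and~\ref{vertexdomination}.

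Fix a maximum independent set $\tilde{\mathcal{I}}$ of $H$ and a child $u \in N(S)$ with parent $s$, so that $N(u)\cap S=\{s\}$. By hypothesis $S \subseteq \tilde{\mathcal{I}}$, in particular $s\in\tilde{\mathcal{I}}$. Suppose, towards a contradiction, that $\tilde{\mathcal{I}} \cap (N(u)\setminus N[S]) = \emptyset$. We then determine where the neighbors of $u$ inside $\tilde{\mathcal{I}}$ can lie.
\begin{enumerate}
	\item A vertex $w \in N(u)\cap S$ must equal $s$, by the child property.
	\item A vertex $w\in N(u)\cap N(S)$ is adjacent to some vertex of $S\subseteq\tilde{\mathcal{I}}$. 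It therefore cannot lie in $\tilde{\mathcal{I}}$.
	\item Vertices in $N(u)\setminus N[S]$ are excluded by the assumption.
\end{enumerate}
Consequently $N(u)\cap\tilde{\mathcal{I}}=\{s\}$. Note also that $u\notin\tilde{\mathcal{I}}$, since $u$ is adjacent to $s\in\tilde{\mathcal{I}}$.

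Now form $\mathcal{I}^*=(\tilde{\mathcal{I}}\setminus\{s\})\cup\{u\}$. By the previous paragraph, $u$ has no neighbor in $\tilde{\mathcal{I}}\setminus\{s\}$, so $\mathcal{I}^*$ is independent in $H$. Every hyperedge contains at most one of its vertices, since a second vertex would be adjacent to $u$ through that edge. Moreover $|\mathcal{I}^*|=|\tilde{\mathcal{I}}|$, so $\mathcal{I}^*$ is a maximum independent set. However, $s\notin\mathcal{I}^*$, so $S\not\subseteq\mathcal{I}^*$, which contradicts the assumption that $S$ is contained in every maximum independent set. Hence $\tilde{\mathcal{I}}$ contains a vertex of $N(u)\setminus N[S]$. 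Since $\tilde{\mathcal{I}}$ and the child $u$ were arbitrary, the statement follows.

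The main point needing care is the case analysis showing $N(u)\cap\tilde{\mathcal{I}}=\{s\}$. Specifically, we must ensure that no vertex of $N(S)$ other than those already accounted for can sit in $\tilde{\mathcal{I}}$; this uses only $S\subseteq\tilde{\mathcal{I}}$ together with independence. The other point is checking that the swap respects hyperedges rather than just pairwise edges. As noted at the start, this reduces to adjacency, because any hyperedge containing both $u$ and some $x\in\tilde{\mathcal{I}}\setminus\{s\}$ would make $x\in N(u)$.
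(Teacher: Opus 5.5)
Your proof is correct and takes the same approach as the paper: assume some maximum independent set misses $N(u)\setminus N[S]$, swap the parent $s$ for the child $u$, and contradict that $S$ lies in every maximum independent set. Your three-case check that $N(u)\cap\tilde{\mathcal{I}}=\{s\}$, together with your remark that $u\notin\tilde{\mathcal{I}}$, fills in a step the paper only justifies through $N(u)\cap(S\setminus\{u'\})=\emptyset$, which on its own covers only the neighbors of $u$ that lie in $S$.
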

\begin{proof}
	Let $S$ be an independent set of $H$ that is contained in all MIS of $H$.
	We assume towards a contradiction that there is a maximum independent set $I$ of $H$ and some child $u\in N(S)$, such that $I \cap (N(u)\setminus N[S])=\emptyset$. Then, the parent $u'\in S\cap N(u)$ of $u$ belongs to $I$, since $S$ is contained in all MIS. Now, we can construct a new independent set $I'=I\setminus \{u'\}\cup \{u\}$. This set is independent in $H$ since $N(u)\cap (S\setminus \{u'\})=\emptyset$ and it holds $I|=|I'|$. By that, we have constructed a new MIS $I'$ that does not contain $S$ entirely, which leads \hbox{to the contradiction.}
\end{proof}

\begin{reduction} [Unconfined] \label{unconfinedreduction}
	To determine whether a vertex $v$ is unconfined, we initialize \(S = \{v\}\) and apply the following simple algorithm:
	\begin{enumerate}
		\item Find \(u \in N(S)\) such that $u$ is a child of $S$. If there exists no such vertex, \mbox{$v$ is confined.}
		\item One of the conditions holds:
		      \begin{enumerate}
			      \item \label{item:unconfined-a} If \(N(u) \setminus N[S] = \emptyset\), $v$ is unconfined.
			      \item \label{item:unconfined-b} If \(|N(u) \setminus N[S]| > 1\), $v$ is confined.
			      \item \label{item:unconfined-c} If \(N(u) \setminus N[S]\) is a single vertex $w$, add $w$ to S and repeat the \mbox{algorithm from 1.}
		      \end{enumerate}
	\end{enumerate}
	If the algorithm terminates at \ref{item:unconfined-a}, the assumption that every maximum independent set in $H$ contains $v$ is false, and $v$ is unconfined.
	Any unconfined vertex can be excluded resulting in \(H' = H - v\), $\mathcal{I}' = \mathcal{I}$ and \(\alpha(H) = \alpha(H')\).
	If the algorithm terminates at step \ref{item:unconfined-b}, the set $S$ confines the vertex $v$, meaning $v$ can not be excluded.
	The algorithm then proceeds with a new vertex \(u \in V\), \mbox{initializing \(S = \{u\}\).}
\end{reduction}

Reduction~\ref{unconfinedreduction} repeatedly grows the set $S$ according to Lemma~\ref{lem:unconfined}. If the algorithm reaches \ref{item:unconfined-a}, we have proven that there is an MIS that does not contain $v$, and therefore we can remove $v$ from the hypergraph.

\textbf{The Reduction Framework.}
The application of our reductions follows a predefined order, and whenever a reduction reduces the hypergraph, the algorithm restarts with the first reduction.
This iterative progress continues until no further reduction can be applied.
To avoid exhaustively applying reductions to every vertex and edge of the hypergraph in each iteration, we only test vertices and edges for reduction again if their neighborhood has changed.
Our preprocessing uses the following set of rules and the given order
	[\ref{degree-zero}, \ref{degree-one}, \ref{twinreduction}, \ref{sunflowerreduction}, \ref{isolatedcliquereduction}, \ref{vertexdomination}, \ref{unconfinedreduction}, \ref{edgedomreduction}].
Our reductions are mostly ordered by increasing complexity. We do not consider other orderings of data reductions, since for other problems reordering has no meaningful impact on performance as Großmann~\etal~\cite{GrossmannLSS24} demonstrated and preliminary experiments for our problem confirm this.
Reduction~\ref{edgeonereduction} is not listed here, since it is checked in other reduction rules and does not enable further reduction progress.
The overall most time consuming reduction in this set is Reduction~\ref{unconfinedreduction}. When omitting Reduction~\ref{unconfinedreduction}, the reduction preprocessing can be slightly faster, but leads to larger reduced instances.

\subfilesbibliography{../bib}

\section{Experimental Evaluation}
\label{sec:experimental_evaluation}

In this section, we demonstrate the effectiveness of our data reduction techniques and compare different methods for solving the MIS problem in hypergraphs using these reductions.
All algorithms and data structures are implemented in C++17 and compiled with g++ (gcc) 9.4, with compiler optimizations enabled (-O3).
The experiments are executed on a machine equipped with an AMD EPYC 9754 128-Core CPU running at 2.25GHz, 256MB L3 cache, and 768\,GB of main memory.
We use Gurobi version 12.0.3~\cite{gurobi}, a commercial ILP solver, to solve the ILP formulation on the hypergraph \ilp{} and on the graph \ilpg{}.
We run each configuration with a one-hour time limit and use four different seeds, taking the geometric mean time as the result. Every solver is restricted to a single thread to ensure fair comparisons.
Whenever our preprocessing is added, we mark it with a \strong{} prefix to the solver name.

We use performance profiles~\cite{PP} for comparing running times across different methods and instances.
In the performance profiles, we plot the fraction of instances as a function of the performance ratio \(\tau\).
Here, \(\tau\) is a factor telling us how much slower the running time of a method is compared to the fastest approach on the same instance.
In this context, \(\tau\) is greater than or equal to 1.
A value of \(\tau=2\), for example, means that the method took twice as long as \hbox{the fastest approach.}

\textbf{Instances.}
For all experiments, we use a randomly selected subset of the \(M_{HG}\) hypergraph dataset, provided by Gottesbüren~\etal~\cite{GottesburenHMSS24} that consists of 488 instances.
Our subset includes a total of 50 instances presented in Table~\ref{tab:reductions_overview} in the appendix. In detail, we use two instances for circuit design (ispd98~\cite{Ispd}), one instance from the DAC 2012 Routability-Driven Placement Contest (dac2012~\cite{DAC12}), and 21 instances derived from general matrices in the SuiteSparse Matrix Collection (ssmc~\cite{SPM11}). Furthermore, we included 26 instances derived from the International SAT Competition 2014 (sat14~\cite{SAT14}).
For the solver comparison, we restrict the set to 32 instances that are optimally solvable by at least one \hbox{of the algorithms.}

\subsection{Reduction Impact on Instance Size}
\label{sec:experiments_instance_size}

\begin{figure}[b]
    \begin{minipage}{.47\textwidth}
        \centering
        \begin{tikzpicture}
            \begin{axis}[height=3cm, width=0.85\textwidth, legend to name=legend_reduction, ylabel=$|H|$, xlabel=$t$,xmode=log,grid,xmax=120, ymax=1,ylabel style={at={(axis description cs:0.,1.05)}, anchor=south, rotate=-90}]
                \addplot+[only marks,opacity=0.6] table[x index=0,y index=1,col sep=comma] {data/figure_data/pred_data.csv};
            \end{axis}
        \end{tikzpicture}
        \captionof{figure}{For each instance, we plot the remaining hypergraph size $|H|$ (in percent) and its reduction time $t$ (in seconds).}\label{fig:reduction_effect}
    \end{minipage}\hfill\begin{minipage}{.47\textwidth}
        \centering
\begin{tikzpicture}
    \begin{axis}[
            ybar,
            bar width=15pt,
            ymajorgrids,
            width=.9\textwidth,
            height=2cm,
            ymax=4.2,
            xtick={1,2,3,4,5},
            enlarge x limits=0.1,
            ytick={1,2,3,4},
            y axis line style={-Stealth},
            x axis line style={-},
            yticklabels={1,2,3,4},
            xticklabels={\ilp,\ilpg,\satAndReduce,\struction,\weGotYouCovered},
            ylabel={Speedup}, 
            xticklabel style={rotate=20, anchor=east,xshift=4pt,yshift=-2pt},
            xlabel={},
            ylabel style={at={(axis description cs:0.,1.05)}, anchor=south, rotate=-90}
        ]
        \addplot[Dark2-B, fill=Dark2-B!70, postaction={ pattern=dots }] coordinates {
                (1,1.96)
                (2,3.84)
                (3,2.45)
                (4,1.16)
                (5,1.09)
            };

    \end{axis}
\end{tikzpicture}
\captionof{figure}{Solver-wise speedup achieved when using reductions. We present the geometric mean speedup over all instances solvable \hbox{by all algorithms.}}\label{fig:speedup_solver_comparison}
    \end{minipage}
\end{figure}
\begin{table}
    \caption{We present the average instance properties and the reduction effect. We show the mean number of vertices $n$, the mean number of edges $m$, average edge size $|e|$, average hypergraph size $|H|$, and the mean reduction time $t$ in seconds.}\label{tab:reduction_effect}
    \centering
    \begin{tabular}{lrrrrr}
                  & $n$               & $m$               & $|e|$            & $|H|$                       & $t$[s]          \\
        \toprule
        original  & \numprint{315399} & \numprint{676599} & \numprint{18.18} & \numprint{12.3}$\times10^6$ & 0               \\
        \strong{} & \numprint{195602} & \numprint{302658} & \numprint{9.08}  & \numprint{2.7}$\times10^6$  & \numprint{6.76}
    \end{tabular}
\end{table}

In the first part of our experimental evaluation, we examine the reduction effect on the instance size. In Table~\ref{tab:reductions_overview} in the Appendix, we present detailed per-instance information on our dataset and the reduction effect. In Table~\ref{tab:reduction_effect}, we present averaged information on the instance size and the reduction achieved. The average instance size can be reduced to 22\,\% using our reductions. This preprocessing takes on average \numprint{6.76} seconds. Figure~\ref{fig:reduction_effect} shows per-instance results. Here, we can see that our routine reduces 34 out of 50 instances by a factor of 2. Our preprocessing can reduce more than 50\,\% of the instances in less than a second, and overall, only eight instances take more than 1 second to reduce and are larger than 50\,\% of the original instance size.
The longest reduction time, with around 90 seconds, is spent on instance \texttt{sat14\_atco\_enc1\_opt2\_05\_4.d} (see Table~\ref{tab:reductions_overview}). This instance is one of the 10 biggest hypergraphs in our dataset. While it is not the largest among these with respect to $|H|$, it has the highest average edge size, exceeding 100. The average edge sizes of the other largest instances are mostly between 2 and 3, and not larger than 10.

\begin{obs}{Reduction Impact on Instance Size}{obs:instance_size}

    Overall, our reduction routine is highly effective at reducing instance sizes. In the majority of cases, the reduction preprocessing takes less than a second, or the size is reduced by more than 50\,\%. We can reduce the instances on average to 22\,\% of the original size, while spending 6.76 seconds.
\end{obs}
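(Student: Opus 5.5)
This observation is empirical rather than mathematical, so my plan is to verify each of its three claims against the measurements already reported. The only earlier results I need are Table~\ref{tab:reduction_effect}, Figure~\ref{fig:reduction_effect} and the per-instance Table~\ref{tab:reductions_overview}. I do not need any of the reduction rules themselves, beyond the fact that Reductions~\ref{edgeonereduction}--\ref{unconfinedreduction} are exact, so that the reduced instances are legitimate equivalent inputs whose size is meaningful to compare.

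First, the claim that instances shrink ``on average to 22\,\%''. I will take it from Table~\ref{tab:reduction_effect}: the mean hypergraph size drops from $12.3\times10^6$ to $2.7\times10^6$, and $2.7/12.3\approx 0.22$. This is a ratio of means, not a mean of per-instance ratios, and I will state it that way. Large instances dominate this quantity, so it can differ from the typical per-instance reduction. The stated running time of $6.76$ seconds is read directly from the same table, as the arithmetic mean of $t$ over all 50 instances. It is inflated by the single outlier \texttt{sat14\_atco\_enc1\_opt2\_05\_4.d}, which takes about 90 seconds. The 6.76-second figure is therefore a mean, not a typical per-instance time, and this is why the ``majority'' claim below must be checked separately.

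Second, the ``majority of cases'' claim is a disjunction: for each instance, either $t<1$\,s or the remaining size is below $50\,\%$. I will count the instances violating both conditions, which are exactly the points in Figure~\ref{fig:reduction_effect} lying to the right of $t=1$ and above $|H|=0.5$. The text reports eight such instances, so $42$ of the $50$ instances, i.e.\ $84\,\%$, satisfy the disjunction. As a supporting check I will note that $34$ of the $50$ instances are already reduced by at least a factor of two, which by itself gives a majority. Finally, the qualitative statement ``highly effective'' follows from combining these two facts: a fivefold reduction in total size achieved at a modest mean cost.

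The main obstacle is not logical but definitional. The observation has to be worded so that ``average'' refers to the same aggregate as in Table~\ref{tab:reduction_effect}, namely a ratio of averages. It also has to be checked that the eight exceptional instances are counted consistently, using the same thresholds and geometric-mean timings over four seeds, in both Figure~\ref{fig:reduction_effect} and Table~\ref{tab:reductions_overview}. I would resolve this first by recomputing both counts directly from the per-instance table.
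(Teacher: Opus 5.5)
Your route is the paper's own: you read $22\,\%$ and $6.76$\,s off Table~\ref{tab:reduction_effect}, and you get the majority claim from the $34$ halved instances and the eight instances that are both slow and large. The timing and majority parts check out. The value $6.76$\,s is the arithmetic mean of the \strong{} times in Table~\ref{tab:reductions_overview}, and $34$ of $50$ instances do satisfy $m_r\bar e_r<\tfrac12 m\bar e$, which already gives a majority.

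The gap is in the step you single out as the main obstacle. Your claim that $22\,\%$ is a ratio of mean hypergraph sizes is false, and the recomputation you propose would reveal this. Per instance, $|H|=m\cdot\bar e$. Summing over Table~\ref{tab:reductions_overview} gives about $1.38\times10^{8}$ before and $4.85\times10^{7}$ after reduction. These are mean sizes of about $2.76\times10^{6}$ and $0.97\times10^{6}$, a ratio of about $0.35$; the mean of the per-instance ratios is about $0.36$. In fact $12.3\times10^{6}$ cannot be a mean at all, since no instance exceeds about $1.52\times10^{7}$ and half of them are below $10^{6}$. What Table~\ref{tab:reduction_effect} reports is the product of the column means: $676\,599\cdot18.18\approx1.23\times10^{7}$ and $302\,658\cdot9.08\approx2.75\times10^{6}$. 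This inconsistency sits in the table itself, so the paper's sentence has the same problem. The $22\,\%$ is a ratio of products of means, which is not an average instance size under any standard aggregation. To support the sentence you must either name that quantity explicitly or correct the figure to roughly $35\,\%$.

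The same recomputation also changes a count, though not a conclusion. Using $m_r\bar e_r/(m\bar e)>0.5$ and $t>1$\,s from the appendix table, there are nine double failures, not eight:
\texttt{dac2012\_superblue6}, \texttt{sat14\_ACG-20-10p1.d}, \texttt{sat14\_AProVE07-27.d}, \texttt{sat14\_SAT\_dat.k85-24\_1\_r3}, both \texttt{atco} instances, \texttt{sat14\_bob12m09-opt.d}, \texttt{sat14\_manol-pipe-c10nidw.d} and \texttt{ssmc\_m14b}.
So the disjunction holds for $41$ instances, still a clear majority.

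Also, the mean time is not inflated by a single outlier. \texttt{dac2012\_superblue6} ($65$\,s) and \texttt{sat14\_manol-pipe-c10nidw.d} ($48.5$\,s) weigh in as well. Dropping these two together with the $90$\,s instance leaves a mean of about $2.9$\,s.
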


\subsection{Reduction Impact on Solver Performance}
\label{sec:experiments_solver_performance}

In this section, we analyse the effect of our reductions on solver performance.
For that, we analyze the speedups achieved by comparing a method with and without our preprocessing. A speedup of larger than 1 means our additional preprocessing resulted in a shorter overall running time compared to not using our reduction routine with that solver. 
In particular, we present experimental data for the portfolio approach (\weGotYouCovered) by Hespe~\etal~\cite{hespe2020wegotyoucovered}, the branch-and-reduce solver combined with SAT solving (\satAndReduce) by Plachetta and van der Grinten~\cite{plachetta2021sat}, and a state-of-the-art branch-and-reduce method \struction{} by Gellner~\etal~\cite{IncreasingTransformation}.
All these methods are exact solvers working on graphs. To utilize these approaches, we use the clique-expansion method, in which each hyperedge is transformed into a clique in the graph.
This clique expansion can also be performed after our reduction applications. Therefore, all solvers can benefit from our preprocessing.
For further comparisons, we also include solving the ILP using Gurobi~\cite{gurobi} on graphs (\ilpg) and hypergraphs (\ilp).
We evaluate all of these approaches on their own and in combination with our reduction preprocessing, marked with a prefix \strong{}.

Table~\ref{tab:overview} presents detailed results for all solvers on the subset of 32 solvable instances. Here, we see that in the majority of cases, using our preprocessing results leads to faster overall running time. In the following, we analyse the impact of the reduction on solver performance from different perspectives. This involves (1) comparing the benefit different solvers have from our reduction preprocessing, (2) investigating the impact on specific solvers, and (3) an instance-focused evaluation. Afterwards, we compare the best configurations for each method to identify the overall best-performing approach for solving the MIS \hbox{problem in hypergraphs.}

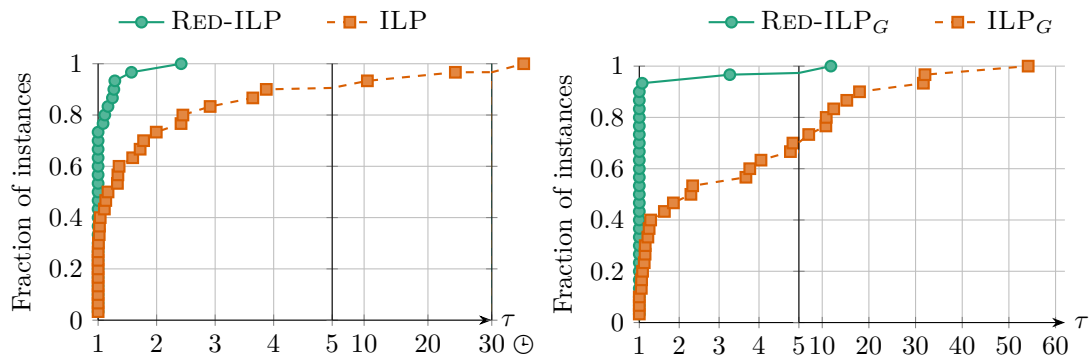
\begin{figure}[t]
\newcommand{\myPath}{data/figure_data}
\newcommand{\List}{rred_data,pred_data}
\newcommand{\ilpList}{pilp_solved,ilp_solved}
\newcommand{\ilpgList}{pilp,ilp}

\centering
\hspace{1.4cm}\ref{ilplegend}\hfill\ref{ilpglegend}
    \begin{tikzpicture}
        \begin{axis}[perf_min, perf_left, height=3.4cm, width=0.22\textwidth, xmax=5, name=p1,
                legend to name=ilplegend, myLegend3, anchor=south west, xshift=2.5cm,grid]
            \foreach \column in \ilpList {
                \addplot+[] table[x={SOLVED/\column},y=fraction, col sep=comma] {data/figure_data/ilp_time.csv};
            }
            \legend{\pilp,\ilp}
        \end{axis}
        \begin{axis}[perf_min, perf_right,height=3.4cm, width=0.15\textwidth, at={(p1.south east)},name=p3, xmin=5,grid,xtick={10,20,30},xmax=30]
            \foreach \column in \ilpList {
                \addplot+[] table[x={SOLVED/\column},y=fraction, col sep=comma] {data/figure_data/ilp_time.csv};
            }
        \end{axis}
        \begin{axis}[perf_min, perf_right, height=3.4cm, width=0.03\textwidth, at={(p3.south east)}, xtick={500}, xticklabels={\large{\clock}}, xmin=25, xmax=500,x axis line style={-},x axis line style={opacity=0}, xtick style={draw=none}, xticklabel style={anchor=north}, xlabel={}]
            \foreach \column in \ilpList {
                \addplot+[] table[x={SOLVED/\column},y=fraction, col sep=comma] {data/figure_data/ilp_time.csv};
            }
        \end{axis}
    \end{tikzpicture}\begin{tikzpicture}
        \begin{axis}[perf_min, perf_left, height=3.4cm, width=0.15\textwidth, xmax=5, name=p1,
                legend to name=ilpglegend, myLegend3, anchor=south west, xshift=2.5cm,grid]
            \foreach \column in \ilpgList {
                \addplot+[] table[x={SOLVED/\column},y=fraction, col sep=comma] {data/figure_data/ilpg_time.csv};
            }
            \legend{\pilpg,\ilpg}
        \end{axis}
        \begin{axis}[perf_min, perf_right,height=3.4cm, width=0.25\textwidth, at={(p1.south east)},name=p3, xmin=5,grid,xtick={10,20,30,40,50,60},xmax=62]
            \foreach \column in \ilpgList {
                \addplot+[] table[x={SOLVED/\column},y=fraction, col sep=comma] {data/figure_data/ilpg_time.csv};
            }
        \end{axis}
    \end{tikzpicture}
\captionof{figure}{Performance profiles comparing running times of \ilp{} (left) and \ilpg{} (right) with and without data reduction preprocessing on solvable instances.  The \clock{} symbol \hbox{indicates a timeout.}}\label{fig:ilp_reduction_effect}
\end{figure}

\subsubsection{Reduction Impact - Solver Comparison}
First, we evaluate how the different solvers react to our reduction preprocessing. All open-source solvers tested (\weGotYouCovered, \satAndReduce, \struction) work on the graph constructed via clique expansion and use data reduction rules for the maximum (weight) independent set problem in graphs. 
We present geometric mean speedups over the instances solved by all algorithms in Figure~\ref{fig:speedup_solver_comparison}.
Overall, we see that all methods benefit from our reduction routine, even though some already use reduction techniques.
When comparing the different methods, we can see varying levels of improvement. The best speedups we get are for \pilp{} at 3.84, while the portfolio approach \weGotYouCovered{} benefits the least from preprocessing.
Looking at Table~\ref{tab:reductions_overview}, we see that the largest speedup of 53 is achieved by \pilpg{} on the instance \texttt{ssmc\_dictionary28}. For this instance, it is noticeable that especially the simple \nameref{degree-one} rule already reduces 64\,\% of the vertices. 

\subsubsection{Reduction Impact - Focus on \ilp{} and \ilpg}
We now analyse the impact of the reduction on single solvers in detail. 
Table~\ref{tab:reductions_overview} shows how many instances are solved by which solver. For this section, we focus on \ilp{} and \ilpg{}, since these have 31 solved instances, almost twice as many as the open-source methods can solve.
In Figure~\ref{fig:ilp_reduction_effect}, we present performance profiles comparing different \ilp{} configurations on the left and \ilpg{} on the right. For \ilp{}, we see that for around 40\,\% of the instances, both approaches perform best (i.e., are fastest or equal). After that fraction of instances, \ilp{} performs worse than the reduction configuration. On more than 20\,\% of the instances, \ilp{} takes at least twice as long as the reduction variant, while the biggest difference is for one instance, where \ilp{} takes more than 20 times longer. This instance is \texttt{sat14\_6s130-opt}, see Table~\ref{tab:overview}. 
Furthermore, with \pilp{} we can solve the instance \texttt{sat14\_ACG-20-10p1.p}, which can not be solved by \ilp{} (see Table~\ref{tab:overview}).
For the reduction configuration, there is only one instance that is solved within a factor of 2 slower than \ilp{}. This is instance \texttt{sat14\_atco\_enc1\_opt2\_10\_12.d}, which is the instance where the reductions took around 90 seconds (see Section~\ref{sec:experiments_instance_size}). Solving this instance with \ilp{} takes around 50 seconds, so using the reduction preprocessing adds too much overhead to be beneficial.
Overall, we observe a significant improvement in the performance of \ilp{} when using our data reduction rules.
On the right in Figure~\ref{fig:ilp_reduction_effect}, we compare \ilpg{} and \pilpg{}. For this approach, we see an even clearer improvement when using our preprocessing. For around 90\,\% of the instances, the reduction variant is fastest, achieving speedups of up to 53x.

\npthousandsep{\,}
\begin{landscape}
    \begin{table}[ht]
        \centering
        \caption{Comparison of different algorithms on the set of solvable instances sorted by original hypergraph size $|H|$. The optimal solutions are presented in column $S$. If a method is unable to solve an instance, we mark it with a -; otherwise, we state the time it took to solve the instance. For each solver, we present the time $t$ (without reductions), as well as $t_{\strong}$ using our reduction preprocessing. All times are given in seconds. For solvers working on graphs, the time for the clique expansion is included. For each solver, we highlight the faster time in \textbf{bold}.}
        \setlength{\tabcolsep}{4.25pt}
\renewcommand{\arraystretch}{.86}
\begin{filecontents*}{data/table_data/exact_time.csv}
opt,graph,time_ILP,time_ILPKStrong,time_ilp_graph,time_pilp_graph,time_weGotyouGraph,time_pweGotyouGraph,time_snrGraph,time_psnrGraph,time_structionGraph,time_pstructionGraph,best,empty
,ssmc\_poli3,0.0952128,0.0702634,1.268455,0.07501335,0.1262601,0.07501335,0.917487,0.07501335,0.1303449,0.07202192,\numprint{3523},
,ssmc\_bips07\_1998,0.177471,0.069043,0.3520321,0.07432516,0.0396892,0.032942668,0.3074981,0.03596308,0.03528131,0.032920733,\numprint{5359},
,sat14\_AProVE07-27.d,6.76152,7.88231,28.5354,23.57259,19.9853,11.09051,247.1577,212.16689,20.3587,18.77829,\numprint{3165},
,ssmc\_g7jac040sc,0.753087,0.569273,13.8270088,1.3014817,0.1501547,0.1508766,2.9097588,1.0671187,0.8592938,0.8441557,\numprint{2960},
,ssmc\_psse2,0.177277,0.0484101,0.2126848,0.05379342,0.02384977,0.02685362,0.1199428,0.027199226,0.02013973,0.026953756,\numprint{3098},
,ssmc\_dictionary28,6.2093,6.37398,345.767585,6.4929685,101.788585,101.7285185,-1,-1,-1,-1,\numprint{29631},
,ssmc\_ca-CondMat,0.444052,0.323967,8.70311,0.33004344,0.4989561,0.33004344,1.74044,0.33004344,0.552708,0.32482494,\numprint{4057},
,sat14\_6s16.p,4.5773,4.97646,4.205587,4.437033,-1,-1,-1,-1,-1,-1,\numprint{11807},
,sat14\_6s11-opt,6.50335,2.34817,8.076674,2.2264811,0.820691,0.2421019,4.196094,1.0725371,0.660517,0.1713591,\numprint{30343},
,sat14\_6s184,17.1572,8.82375,16.862163,7.2479092,0.795526,0.5286702,11.021163,11.5622792,13.412763,13.9556792,\numprint{30630},
,sat14\_6s12,15.7667,11.582,17.755677,13.7598247,1.10217,0.5025687,12.015677,7.8288047,1.105075,0.6378497,\numprint{31353},
,ssmc\_lp\_pds\_20,1.08812,0.892363,4.352767,2.728206,-1,-1,-1,-1,-1,-1,\numprint{14743},
,ssmc\_ex19,0.227956,0.108866,1.620303,0.11327279,0.168173,0.11327279,0.958227,0.11327279,0.1799201,0.07501335,\numprint{985},
,ssmc\_lhr14,0.368953,0.220332,3.440477,0.3340256,0.2223171,0.201006567,1.001684,0.2323472,0.2247349,0.20161358,\numprint{4314},
,sat14\_6s133.p,6.02835,6.76662,6.017714,5.5364929,3.953914,3.7619629,-1,-1,16.181644,6.4052829,\numprint{18577},
,ispd98\_ibm11,-1,-1,2566.162028,2231.310113,-1,-1,-1,-1,-1,-1,\numprint{19660},
,sat14\_6s130-opt,73.9401,3.05092,109.236691,3.3232515,0.721928,0.3555795,4.832961,2.9047515,0.5453981,0.2678357,\numprint{45782},
,sat14\_6s130-opt.p,37.0645,44.5071,52.366948,41.0764501,1.518028,1.1979201,-1,-1,1.792558,1.1643201,\numprint{18727},
,sat14\_bob12m09-opt.d,20.0069,22.1469,86.19312,23.40556,-1,-1,-1,-1,-1,-1,\numprint{24209},
,ssmc\_garon2,1.3407,0.757206,5.459738,1.1849817,-1,-1,-1,-1,-1,-1,\numprint{400},
,sat14\_atco\_enc1\_opt2\_10\_12.d,105.146,102.209,-1,-1,-1,-1,-1,-1,-1,-1,\numprint{4594},
,ssmc\_msc10848,1.30035,1.78708,12.58058,1.86901414,1.361788,1.86948964,4.29158,1.83338097,1.667177,1.83077953,\numprint{88},
,sat14\_dated-10-11-u,154.922,146.829,41.157732,34.295623,-1,-1,-1,-1,-1,-1,\numprint{90537},
,sat14\_atco\_enc1\_opt2\_05\_4.d,50.1075,125.683,-1,-1,-1,-1,-1,-1,-1,-1,\numprint{7086},
,sat14\_manol-pipe-c10nid\_i.p,700.845,1101.02,711.535317,623.917637,-1,-1,-1,-1,-1,-1,\numprint{91911},
,sat14\_dated-10-17-u,-1,-1,1369.00664,1251.713915,-1,-1,-1,-1,-1,-1,\numprint{146629},
,sat14\_ACG-20-10p1.p,-1,3144.24,2173.7001,2074.088913,-1,-1,-1,-1,-1,-1,\numprint{113471},
,sat14\_openstacks-p30\_3.085,62.2568,52.3685,47.62701,21.27872,-1,-1,-1,-1,-1,-1,\numprint{166047},
,sat14\_q\_query\_3\_L100\_coli.p,339.262,412.356,223.97324,760.878362,-1,-1,-1,-1,-1,-1,\numprint{131181},
,ssmc\_sls,75.7554,50.893,325.57906,181.831659,1.525079,20.392671,2975.05806,1771.732659,15.79166,34.652359,\numprint{54721},
,sat14\_q\_query\_3\_L150\_coli,318.073,36.2554,412.9406,32.450234,-1,-1,-1,-1,-1,-1,\numprint{484565},
,sat14\_E02F22.p,221.358,52.387,1.727705,20.864948,-1,-1,510.531024,529.439972,-1,-1,\numprint{594},
\end{filecontents*}
\DTLloaddb{data}{data/table_data/exact_time.csv}
\hspace{-9.95cm}\begin{adjustbox}{max width=.89\textwidth, max totalheight=\textheight, keepaspectratio}
	\begin{tabular*}{\textwidth}{lrrrrrrrrrrrrrrrr}
		&
		&&\multicolumn{2}{c}{\ilp}
		&&\multicolumn{2}{c}{\ilpg}
		&&\multicolumn{2}{c}{\weGotYouCovered}
		&&\multicolumn{2}{c}{\satAndReduce}
		&&\multicolumn{2}{c}{\struction}\\
		Graph & $\alpha(H)$
		&& $t$ & $t_{\strong}$
		&& $t$ & $t_{\strong}$
		&& $t$ & $t_{\strong}$
		&& $t$ & $t_{\strong}$
		&& $t$ & $t_{\strong}$ \\
		\cmidrule{1-2}
		\cmidrule{4-5}
		\cmidrule{7-8}
		\cmidrule{10-11}
		\cmidrule{13-14}
		\cmidrule{16-17}


		\DTLforeach{data}
		{\opt=opt,\graph=graph,\tilp=time_ILP,\tpilp=time_ILPKStrong,\tilpg=time_ilp_graph,\tpilpg=time_pilp_graph,\twe=time_weGotyouGraph,\tpwe=time_pweGotyouGraph,\tsat=time_snrGraph,\tpsat=time_psnrGraph,\tstr=time_structionGraph,\tpstr=time_pstructionGraph,\S=best,\empty=empty}%
		{%
			\texttt{\graph} & \S
			&& \CompareTimesStyle{\tilp}{\tpilp} & \CompareTimesStyle{\tpilp}{\tilp}
			&& \CompareTimesStyle{\tilpg}{\tpilpg} & \CompareTimesStyle{\tpilpg}{\tilpg}
			&& \CompareTimesStyle{\twe}{\tpwe} & \CompareTimesStyle{\tpwe}{\twe}
			&& \CompareTimesStyle{\tsat}{\tpsat} & \CompareTimesStyle{\tpsat}{\tsat}
			&& \CompareTimesStyle{\tstr}{\tpstr} & \CompareTimesStyle{\tpstr}{\tstr} \\
		}
	\end{tabular*}
\end{adjustbox}
        \label{tab:overview}
    \end{table}
\end{landscape}

\subsubsection{Reduction Impact - Instance Comparison}
\begin{figure}[t]
\centering
\ref{legend_speedup2}\vspace{-.6cm}
\begin{adjustbox}{width=\textwidth}
\begin{tikzpicture}
\begin{axis}[
    ybar,
    bar width=10pt,
    ymajorgrids,
    myLegend6,
    legend to name=legend_speedup2,
    width=\textwidth,
    xtick=data,
    symbolic x coords={sat14-atco-enc1-opt2-05, sat14-q-query-3-L100, ssmc-sls, sat14-E02F22, sat14-manol-pipe, sat14-6s16, sat14-atco-enc1-opt2-10, sat14-ACG, sat14-dated-10-17, sat14-dated-10-11, ispd98-ibm11, sat14-AProVE, sat14-6s130.p, sat14-6s133, ssmc-lp-pds-20, sat14-6s184, ssmc-msc10848, sat14-6s12, sat14-openstacks, sat14-bob12m09, ssmc-g7jac040sc, ssmc-psse2, ssmc-lhr14, ssmc-bips07-1998, ssmc-garon2, sat14-6s11, ssmc-dictionary28, ssmc-ca-CondMat, ssmc-ex19, ssmc-poli3, sat14-6s130, sat14-q-query-3-L150},
    height=3cm,
    ymode=log,
    enlarge x limits=0.02,
    ytick={0.5,1,2,4,8,16},
    yticklabels={0.5,1,2,4,8,16},   
    log basis y=2,
    xticklabel style={rotate=45, anchor=east,xshift=4pt,yshift=-2pt},
	x axis line style={-},
	y axis line style={-Stealth},
    ylabel={Speedup}, ylabel style={at={(axis description cs:0.,1.05)}, anchor=south, rotate=-90},
    xlabel={}
]


\addplot[Dark2-B, fill=Dark2-B!70, postaction={ pattern=dots }, xshift=-1pt] table[x=graph, y=STRONG, col sep=tab] {data/figure_data/speedup.tsv};

\end{axis}
\end{tikzpicture}
\end{adjustbox}
\caption{Instance-wise speedup comparison of our reduction configurations.  For each instance, we compute the speedup for all solvers achieved when using our two reduction configurations compared to not using our reductions. We display the geometric mean over all methods that solved the corresponding instance. Note that we exchanged $\_$ with $-$ and abbreviated some instance names.}\label{fig:speedup_instance_comparison}
\end{figure}
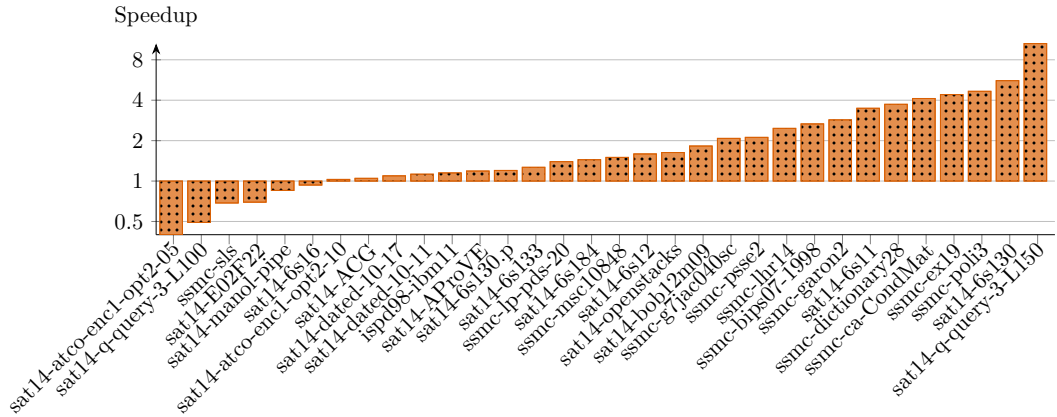
In this section, we perform an instance-focused analysis. Again, we compare the speedup achieved by our reduction techniques to the same method without preprocessing. In Figure~\ref{fig:speedup_instance_comparison}, we show for each instance the geometric mean speedup over all methods that solved that instance. Here, we see that in 6 out of 32 instances, the reduction preprocessing overall led to longer running times (speedup $< 1$). This is most noticeable for the instance \texttt{sat14\_atco\_enc1\_opt2\_10\_12.d}, where the time spent reducing the instance is longer than solving the instance right away, as already discussed above. However, when looking at Table~\ref{tab:overview} and Table~\ref{tab:reductions_overview}, we see that for the second instance \texttt{sat14\_q\_query\_3\_L100}, the reduction time is only 4.34 seconds, while solving the instance takes around \numprint{224} seconds for \ilp{} and \numprint{761} seconds for \pilpg{}. This shows that the increased running time is not due to high overhead in reduction time, but that the reduced instance is, in the end, more difficult to solve.
For the majority of instances, however, the reduction preprocessing does result in a speedup. Averaging over all instances and all solvers, we achieve a speedup of 1.9x. The best benefit for the solvers is observed for instance \texttt{sat14\_q\_query\_3\_L150}, where an 8x average speedup over all solvers is reached. 

\begin{obs}{Reduction Impact on Solver Performance}{}
 Even though some outliers are difficult to reduce or become more difficult to solve after reduction, using our preprocessing improves the overall performance of all solvers tested. The average speedup observed is 1.9x, while the greatest improvement in running time observed is for \ilpg{} with a factor of up to 53. 
\end{obs}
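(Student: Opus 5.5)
The Observation is an empirical claim, so the plan is to establish it from the measured running times in Table~\ref{tab:overview}, not by deduction. It has three parts, and I would verify each in turn: (i) every tested solver improves on aggregate with our preprocessing; (ii) the average speedup is $1.9$x; (iii) the largest single improvement is a factor of about $53$ for \ilpg{}. The common ingredient is a per-pair speedup. For every solver $A$ and instance $h$ that $A$ solves both with and without \strong{}, set $s_A(h) = t_A(h)/t_{\strong,A}(h)$, where $t_{\strong,A}$ includes the reduction time and, for graph solvers, the clique expansion. Pairs where $A$ fails in either configuration are set aside, except that a newly solved instance, such as \texttt{sat14\_ACG-20-10p1.p} for \pilp{}, is recorded separately as a qualitative gain.

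For (i), I would compute for each solver the geometric mean of $s_A(h)$ over the instances solved by all algorithms, which gives Figure~\ref{fig:speedup_solver_comparison}. It then suffices to check that each of the five values exceeds $1$: the smallest, for \weGotYouCovered{}, is $1.09$. The outliers must be stated honestly, since they are the ``even though'' clause of the Observation. I would read off Figure~\ref{fig:speedup_instance_comparison} the six instances with speedup below $1$ and attribute each to one of two causes:
\begin{itemize}
\item reduction overhead, as for \texttt{sat14\_atco\_enc1\_opt2\_10\_12.d};
\item a reduced instance that is harder to solve, as for \texttt{sat14\_q\_query\_3\_L100}, where reduction takes only $4.34$\,s.
\end{itemize}

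For (ii), I would first take, for each instance, the geometric mean of $s_A(h)$ over all solvers that solved it. I would then take the geometric mean of these per-instance values over the 32 instances and confirm that it is $1.9$. For (iii), I would scan all pairs $(A,h)$ for the maximum of $s_A(h)$. This gives $345.77/6.49 \approx 53$ for \ilpg{} on \texttt{ssmc\_dictionary28}, and I would check that no other pair exceeds it.

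The main obstacle is making the aggregation precise and consistent. The abstract reports $3.84$x for the best solver, which differs from the per-solver mean of $1.96$x for \ilp{} and from the overall $1.9$x. The text also labels the $3.84$ bar as \pilp{}, although the figure order assigns it to \ilpg{}. I would therefore fix one convention explicitly: geometric means, computed over commonly solved instances for the per-solver figures and over all solved pairs for the overall figure. Under that convention the numbers $1.9$ and $53$ are reproducible. I would also check that the seed-wise geometric means do not change the sign of any speedup.
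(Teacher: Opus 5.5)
Your decomposition follows the paper's own argument. It uses the per-solver geometric-mean speedups, all above $1$ (Figure~\ref{fig:speedup_solver_comparison}), and the six instances below $1$ in Figure~\ref{fig:speedup_instance_comparison}, explained by overhead or by a harder reduced instance. It then uses the overall $1.9$x and the maximum $345.77/6.49\approx 53.3$ for \ilpg{} on \texttt{ssmc\_dictionary28}.

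\textbf{Step (ii) fails as specified.} The paper's $1.9$x is averaged ``over all instances and all solvers''. That is one geometric mean over all (solver, instance) pairs solved in both configurations. Table~\ref{tab:overview} has $107$ such pairs, and their pooled geometric mean is $\approx 1.90$. Your two-stage mean (per instance first, then over the $32$ instances) gives each instance equal weight, however many solvers finished it. Relative to pooling, this upweights the instances solved by only one or two solvers, which is where the slowdowns are: \texttt{sat14\_atco\_enc1\_opt2\_05\_4.d}, \texttt{sat14\_q\_query\_3\_L100\_coli.p}, \texttt{sat14\_manol-pipe-c10nid\_i.p} and \texttt{sat14\_6s16.p}. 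It also downweights the five-solver instances with large gains, such as \texttt{ssmc\_poli3}, \texttt{ssmc\_ca-CondMat}, \texttt{ssmc\_ex19} and \texttt{sat14\_6s130-opt}. On the table's data the two-stage mean comes out near $1.65$, so ``confirm that it is $1.9$'' would not succeed. Your closing sentence (``over all solved pairs for the overall figure'') names the right convention. It contradicts step (ii) as written, so (ii) should pool pairs rather than average per-instance means.

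\textbf{A factual slip copied from the paper's prose.} \texttt{sat14\_atco\_enc1\_opt2\_10\_12.d} is not an outlier. Its only solver, \ilp{}, goes from $105.1$\,s to $102.2$\,s, a speedup of $\approx 1.03$. The reduction-overhead case is \texttt{sat14\_atco\_enc1\_opt2\_05\_4.d}: about $90$\,s of reduction, and \ilp{} takes $50.1$\,s versus $125.7$\,s with \strong{}, as Section~\ref{sec:experiments_instance_size} states. The six instances below $1$ are that one together with \texttt{sat14\_q\_query\_3\_L100\_coli.p}, \texttt{ssmc\_sls}, \texttt{sat14\_E02F22.p}, \texttt{sat14\_manol-pipe-c10nid\_i.p} and \texttt{sat14\_6s16.p}.

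\textbf{For (i), rely only on the sign of the bars.} Recomputing from Table~\ref{tab:overview} over the $14$ instances solved by all ten configurations roughly reproduces $1.96$ for \ilp{}. It does not reproduce the other bars; for example, \ilpg{} comes out near $5.9$. All five values stay above $1$, so the conclusion is unaffected.
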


\subsection{Solver Comparison}

\newcommand{\List}{pilp_solved,psat_reduce,pstruction,pilp,pweGotYou}
\begin{figure}[t]
    \captionsetup[subfigure]{justification=centering}
    \centering
    \ref{legend_times_solved}
    \begin{adjustbox}{width=\textwidth}
        \begin{tikzpicture}
            \centering
            \begin{axis}[perf_min, perf_left, height=3.5cm, width=0.5\textwidth, xmax=10, name=p1, legend to name=legend_times_solved, myLegend3,restrict x to domain=1:100]
                \foreach \column in \List {
                    \addplot+[] table[x={SOLVED/\column},y=fraction, col sep=comma] {data/figure_data/exact_solved_time.csv};
                }
                \legend{\pilp,\psatAndReduce,\pstruction,\pilpg,\pweGotYouCovered}
            \end{axis}
            \begin{axis}[perf_min, perf_right,height=3.5cm, width=0.3\textwidth, xmin=10.1, at={(p1.south east)},name=p2,xmax=1000,xmode=log]
                \foreach \column in \List {
                    \addplot+[] table[x={SOLVED/\column},y=fraction, col sep=comma] {data/figure_data/exact_solved_time.csv};
                }
            \end{axis}
            \begin{axis}[perf_min, perf_mid, height=3.5cm, width=0.05\textwidth, at={(p2.south east)}, xtick={5000,5500}, xticklabels={\large{\clock}}, xmin=1000, xmax=5500,x axis line style={opacity=0}, xtick style={draw=none}, xticklabel style={anchor=north},
                ]
                \foreach \column in \List {
                    \addplot+[] table[x={SOLVED/\column},y=fraction, col sep=comma] {data/figure_data/exact_solved_time.csv};
                }
            \end{axis}
        \end{tikzpicture}
    \end{adjustbox}
    \caption{Performance profile comparing all methods combined with \strong{} on solvable instances. The method \pilp{} is the only one that works directly on hypergraphs. For all others, the clique graph transformation is performed and included in the presented time. The \clock{} symbol \hbox{indicates a timeout.}}\label{fig:soa_time_comparison}
\end{figure}
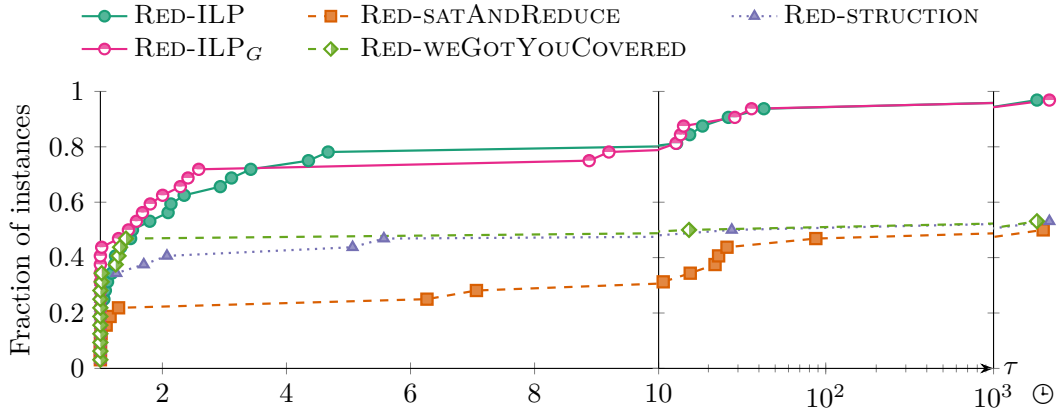

In this section, we compare the different solvers using \strong{} as a preprocessing since it is overall beneficial on the majority of instances and all solvers tested.
Figure~\ref{fig:soa_time_comparison} shows a performance profile comparing all methods on the set of solvable instances. We can see that the two Gurobi-based methods perform best, with faster execution times and the highest number of solved instances (31 out of 32). 
\pilpg{} performs slightly better compared to \pilp{} working directly on the hypergraph: 
For \pilpg{} 72\,\% of the instances can be solved within a factor of 2.5 of the running time used by the best performing algorithm on these instances, while \pilp{} reaches a factor of 3.4 at 72\,\%. The open-source methods can only solve around 53\,\% of these instances.
Among the open-source solvers, \weGotYouCovered{} can compete with the Gurobi-based methods on approximately 50\,\% of the instances and is the third-fastest method in this comparison overall.

\section{Conclusion and Future Work}
\label{sec:conclusion}

In this paper, we introduced nine new exact data reduction rules for the strong Maximum Independent Set problem in hypergraphs. With these reduction rules, our preprocessing routine can simplify a given instance. 
We evaluate the effect of our reduction routine with extensive experiments.
Our results demonstrate a significant reduction in instance size and improvements in running time for subsequent solvers. 
We compared different open-source methods and solving an ILP formulation using Gurobi, a commercial ILP solver.
The preprocessing routine reduces instances, on average, to 22\,\% of their original size while taking only 6.76 seconds. When combining our reduction preprocessing with the best-performing exact solver, we observe an average speedup of 3.84x over not using the reduction rules. On single instances, we can achieve speedups of up to a factor of 53. 
Additionally, a new instance became solvable by a method when combined \hbox{with our preprocessing.}

For future work, we are interested in developing additional data reduction rules for this problem as well as its weighted generalization. Furthermore, we would like to engineer different algorithms and heuristics for the Maximum Independent Set problem on hypergraphs, utilizing these reduction rules. It is well known that heuristic methods can also benefit from reduction preprocessing, so this is a promising direction for future work, especially on large hypergraphs, where using clique expansion to utilize solvers on graphs becomes infeasible.

\newpage

\bibliography{quellen}

\npthousandsep{\,}
\newpage
\appendix
\begin{table}
        \caption{We present the properties of the full set of instances combined with their reduction details. We give the number of vertices $n$, number of edges $m$ and the average edge size $\bar e$.}\label{tab:reductions_overview} 
    
\begin{filecontents*}{data/table_data/reductions_rn.csv}
opt,graph,n,m,e,time-reduceKFast,time-reduceKStrong,rn-reduceKFast,rm-reduceKFast,n-reduceKFast,m-reduceKFast,re-reduceKFast,re-reduceKStrong,rn-reduceKStrong,rm-reduceKStrong,n-reduceKStrong,m-reduceKStrong,best
,dac2012\_superblue6,\numprint{1011661},\numprint{1006628},3.44717,50.2825,65.2636,\numprint{715794},\numprint{658611},\numprint{1011662},\numprint{1006629},3.52856,3.52856,\numprint{538067},\numprint{551314},\numprint{1011662},\numprint{1006629},538067
,isdp98\_ibm11,\numprint{70558},\numprint{81453},4.11393,0.319856,0.394288,\numprint{54108},\numprint{54090},\numprint{70558},\numprint{81454},4.16633,4.16753,\numprint{54108},\numprint{54090},\numprint{70558},\numprint{81454},54108
,isdp98\_ibm12,\numprint{71075},\numprint{77240},3.36521,0.576767,0.654386,\numprint{44300},\numprint{40396},\numprint{71076},\numprint{77240},3.30305,3.30394,\numprint{43110},\numprint{40184},\numprint{71076},\numprint{77240},43110
,sat14\_6s11-opt,\numprint{66551},\numprint{97312},2.33332,0.0945627,0.120308,\numprint{33620},\numprint{36071},\numprint{66552},\numprint{97312},2.24862,2.24903,\numprint{26564},\numprint{35976},\numprint{66552},\numprint{97312},26564
,sat14\_6s12,\numprint{68066},\numprint{99579},2.33332,0.075209,0.0960983,\numprint{55955},\numprint{77796},\numprint{68066},\numprint{99580},2.28906,2.28906,\numprint{55955},\numprint{77796},\numprint{68066},\numprint{99580},55955
,sat14\_6s130-opt,\numprint{98653},\numprint{144361},2.33332,0.145205,0.181252,\numprint{54294},\numprint{52785},\numprint{98654},\numprint{144361},2.24926,2.24938,\numprint{39269},\numprint{52759},\numprint{98654},\numprint{144361},39269
,sat14\_6s130-opt.p,\numprint{49326},\numprint{144361},2.33332,0.0606468,0.11085,\numprint{46336},\numprint{112226},\numprint{49327},\numprint{144361},2.32809,2.81648,\numprint{46059},\numprint{45057},\numprint{49327},\numprint{144361},46059
,sat14\_6s133.p,\numprint{48215},\numprint{140968},2.33332,0.0580481,0.107373,\numprint{45641},\numprint{111738},\numprint{48215},\numprint{140968},2.33513,2.83561,\numprint{45477},\numprint{44364},\numprint{48215},\numprint{140968},45477
,sat14\_6s16.p,\numprint{31482},\numprint{91887},2.33332,0.0384737,0.0799837,\numprint{29679},\numprint{66289},\numprint{31483},\numprint{91888},2.36155,2.8306,\numprint{29623},\numprint{28855},\numprint{31483},\numprint{91888},29623
,sat14\_6s184,\numprint{66729},\numprint{97515},2.33332,0.0794089,0.105494,\numprint{47820},\numprint{64943},\numprint{66730},\numprint{97516},2.27926,2.27926,\numprint{47820},\numprint{64943},\numprint{66730},\numprint{97516},47820
,sat14\_ACG-20-10p1.d,\numprint{1632906},\numprint{381708},10.0649,6.76757,7.02257,\numprint{1037508},\numprint{368559},\numprint{1632906},\numprint{381708},5.63011,5.63011,\numprint{1037508},\numprint{368559},\numprint{1632906},\numprint{381708},1037508
,sat14\_AProVE07-27.d,\numprint{29194},\numprint{7729},9.97852,1.52903,1.52248,\numprint{17884},\numprint{7686},\numprint{29194},\numprint{7729},5.46838,5.46838,\numprint{17884},\numprint{7686},\numprint{29194},\numprint{7729},17884
,sat14\_E02F22.p,\numprint{13573},\numprint{1301187},8.80893,5.49551,19.3641,\numprint{13570},\numprint{1281092},\numprint{13574},\numprint{1301188},8.90855,10.8441,\numprint{13570},\numprint{301188},\numprint{13574},\numprint{1301188},13570
,sat14\_MD5-30-5.d,\numprint{68102},\numprint{8904},29.8827,0.766814,0.782879,\numprint{16083},\numprint{8879},\numprint{68103},\numprint{8905},5.70143,5.70143,\numprint{16083},\numprint{8879},\numprint{68103},\numprint{8905},16083
,sat14\_SAT\_dat.k75-24\_1\_r3.p,\numprint{1213331},\numprint{4754041},2.51027,3.46179,3.78974,\numprint{893532},\numprint{2202814},\numprint{1213331},\numprint{4754042},2.36252,2.44029,\numprint{850574},\numprint{1752302},\numprint{1213331},\numprint{4754042},850574
,sat14\_SAT\_dat.k85-24\_1\_r3,\numprint{2712808},\numprint{5395102},2.51031,5.0014,6.08728,\numprint{2376839},\numprint{4228418},\numprint{2712808},\numprint{5395102},2.32778,2.33266,\numprint{2261737},\numprint{4119172},\numprint{2712808},\numprint{5395102},2261737
,sat14\_SAT\_dat.k95-24\_1\_r3.p,\numprint{1540070},\numprint{6036161},2.51034,4.43476,4.78964,\numprint{1135012},\numprint{2801254},\numprint{1540071},\numprint{6036162},2.3622,2.42012,\numprint{1092054},\numprint{2350742},\numprint{1540071},\numprint{6036162},1092054
,sat14\_UCG-15-10p1,\numprint{400006},\numprint{1019220},2.38519,0.535438,0.757355,\numprint{264156},\numprint{722663},\numprint{400006},\numprint{1019221},2.36498,2.38612,\numprint{258960},\numprint{668424},\numprint{400006},\numprint{1019221},258960
,sat14\_UCG-15-10p1.p,\numprint{200003},\numprint{1019220},2.38519,0.385415,0.681908,\numprint{196530},\numprint{953836},\numprint{200003},\numprint{1019221},2.35931,2.79874,\numprint{196520},\numprint{414909},\numprint{200003},\numprint{1019221},196520
,sat14\_atco\_enc1\_opt2\_05\_4.d,\numprint{386162},\numprint{14635},112.927,56.1799,89.5522,\numprint{379637},\numprint{14391},\numprint{386163},\numprint{14636},112.729,114.819,\numprint{358895},\numprint{13351},\numprint{386163},\numprint{14636},358895
,sat14\_atco\_enc1\_opt2\_10\_12.d,\numprint{147852},\numprint{9494},65.1509,9.74993,9.7703,\numprint{140482},\numprint{9037},\numprint{147853},\numprint{9495},65.2579,65.2579,\numprint{140482},\numprint{9037},\numprint{147853},\numprint{9495},140482
,sat14\_bob12m09-opt.d,\numprint{152446},\numprint{51143},6.95499,2.91594,2.94878,\numprint{103952},\numprint{51143},\numprint{152446},\numprint{51144},4.11059,4.11059,\numprint{103952},\numprint{51143},\numprint{152446},\numprint{51144},103952
,sat14\_dated-10-11-u,\numprint{283720},\numprint{629460},2.27158,0.467133,0.805884,\numprint{128872},\numprint{149104},\numprint{283720},\numprint{629461},2.58269,2.58269,\numprint{80977},\numprint{149104},\numprint{283720},\numprint{629461},80977
,sat14\_dated-10-17-u,\numprint{459088},\numprint{1070757},2.30783,0.839878,1.23592,\numprint{212513},\numprint{305594},\numprint{459088},\numprint{1070757},2.64037,2.64037,\numprint{136195},\numprint{305594},\numprint{459088},\numprint{1070757},136195
,sat14\_manol-pipe-c10nid\_i.p,\numprint{252516},\numprint{750876},2.33333,0.350012,0.528328,\numprint{217170},\numprint{601920},\numprint{252516},\numprint{750877},2.2709,2.76144,\numprint{215809},\numprint{214147},\numprint{252516},\numprint{750877},215809
,sat14\_manol-pipe-c10nidw.d,\numprint{1291714},\numprint{433601},6.95109,47.8718,48.5203,\numprint{994621},\numprint{433296},\numprint{1291714},\numprint{433601},4.90007,4.90007,\numprint{994621},\numprint{433296},\numprint{1291714},\numprint{433601},994621
,sat14\_openstacks-p30\_3.085,\numprint{643132},\numprint{1643601},2.37843,3.41668,3.67201,\numprint{257777},\numprint{93065},\numprint{643132},\numprint{1643601},5.35343,5.35343,\numprint{133579},\numprint{93065},\numprint{643132},\numprint{1643601},133579
,sat14\_q\_query\_3\_L100\_coli.p,\numprint{315616},\numprint{1522582},2.8945,2.85629,4.34252,\numprint{250330},\numprint{1274918},\numprint{315617},\numprint{1522583},2.75469,3.08568,\numprint{223699},\numprint{592538},\numprint{315617},\numprint{1522583},223699
,sat14\_q\_query\_3\_L150\_coli,\numprint{973984},\numprint{2456707},2.90922,3.6472,4.00404,\numprint{283996},\numprint{379759},\numprint{973984},\numprint{2456708},2.21531,2.21565,\numprint{108840},\numprint{377131},\numprint{973984},\numprint{2456708},108840
,ssmc\_ABACUS\_shell\_hd,\numprint{23411},\numprint{23411},9.33214,0.0929969,0.107097,\numprint{8976},\numprint{9006},\numprint{23412},\numprint{23412},6.66189,6.66189,\numprint{8976},\numprint{9006},\numprint{23412},\numprint{23412},8976
,ssmc\_BenElechi1,\numprint{245873},\numprint{245873},53.4847,6.75232,6.92756,\numprint{25703},\numprint{164220},\numprint{245874},\numprint{245874},7.13712,7.90685,\numprint{24533},\numprint{24605},\numprint{245874},\numprint{245874},24533
,ssmc\_NotreDame\_actors,\numprint{127823},\numprint{383640},3.83277,10.0665,10.051,\numprint{17688},\numprint{16771},\numprint{127823},\numprint{383640},3.39175,3.43879,\numprint{11656},\numprint{16190},\numprint{127823},\numprint{383640},11656
,ssmc\_bips07\_1998,\numprint{15065},\numprint{15065},4.12837,0.0252834,0.0282382,\numprint{459},\numprint{448},\numprint{15066},\numprint{15066},2.83482,2.83482,\numprint{459},\numprint{448},\numprint{15066},\numprint{15066},459
,ssmc\_ca-CondMat,\numprint{23132},\numprint{23132},8.08092,0.319161,0.324273,\numprint{0},\numprint{0},\numprint{23133},\numprint{23133},0,0,\numprint{0},\numprint{0},\numprint{23133},\numprint{23133},0
,ssmc\_dictionary28,\numprint{52651},\numprint{39326},4.52809,0.168802,0.186003,\numprint{4548},\numprint{6100},\numprint{52652},\numprint{39327},3.98803,3.98803,\numprint{4548},\numprint{6100},\numprint{52652},\numprint{39327},4548
,ssmc\_ex19,\numprint{12004},\numprint{12004},21.6476,0.105671,0.109396,\numprint{0},\numprint{0},\numprint{12005},\numprint{12005},0,0,\numprint{0},\numprint{0},\numprint{12005},\numprint{12005},0
,ssmc\_fd18,\numprint{16428},\numprint{16428},3.85963,0.0172979,0.0326585,\numprint{13712},\numprint{13652},\numprint{16428},\numprint{16428},3.92836,3.92836,\numprint{13712},\numprint{13652},\numprint{16428},\numprint{16428},13712
,ssmc\_g7jac040sc,\numprint{11789},\numprint{11789},9.72612,0.0794039,0.0896445,\numprint{6501},\numprint{6923},\numprint{11790},\numprint{11790},4.70605,4.70605,\numprint{6501},\numprint{6923},\numprint{11790},\numprint{11790},6501
,ssmc\_garon2,\numprint{13535},\numprint{13535},28.859,0.188716,0.196093,\numprint{3666},\numprint{1456},\numprint{13535},\numprint{13535},9.85852,9.85852,\numprint{3666},\numprint{1456},\numprint{13535},\numprint{13535},3666
,ssmc\_lhr14,\numprint{14269},\numprint{14269},21.5738,0.182322,0.188948,\numprint{692},\numprint{544},\numprint{14270},\numprint{14270},23.4798,23.4798,\numprint{692},\numprint{544},\numprint{14270},\numprint{14270},692
,ssmc\_lp\_pds\_20,\numprint{108174},\numprint{33798},6.88345,0.290964,0.324035,\numprint{66574},\numprint{17781},\numprint{108175},\numprint{33798},7.48822,7.48822,\numprint{66574},\numprint{17781},\numprint{108175},\numprint{33798},66574
,ssmc\_m14b,\numprint{214765},\numprint{214765},15.6359,2.91748,2.9144,\numprint{182021},\numprint{199595},\numprint{214765},\numprint{214765},13.3087,13.3087,\numprint{182021},\numprint{199595},\numprint{214765},\numprint{214765},182021
,ssmc\_msc10848,\numprint{10848},\numprint{10848},113.364,1.74347,1.80166,\numprint{182},\numprint{158},\numprint{10848},\numprint{10848},3.53797,3.53797,\numprint{182},\numprint{158},\numprint{10848},\numprint{10848},182
,ssmc\_pdb1HYS,\numprint{36417},\numprint{36417},119.306,5.15839,5.3907,\numprint{10907},\numprint{16139},\numprint{36417},\numprint{36417},39.4991,41.0618,\numprint{10859},\numprint{9897},\numprint{36417},\numprint{36417},10859
,ssmc\_poli3,\numprint{16955},\numprint{16955},2.23232,0.0684567,0.0701345,\numprint{0},\numprint{0},\numprint{16955},\numprint{16955},0,0,\numprint{0},\numprint{0},\numprint{16955},\numprint{16955},0
,ssmc\_psse2,\numprint{11028},\numprint{28633},4.02535,0.0207842,0.0220501,\numprint{63},\numprint{57},\numprint{11028},\numprint{28634},2.21053,2.21053,\numprint{63},\numprint{57},\numprint{11028},\numprint{28634},63
,ssmc\_rgg\_n\_2\_18\_s0,\numprint{262143},\numprint{262141},11.805,2.19212,2.56876,\numprint{161198},\numprint{238657},\numprint{262144},\numprint{262141},7.09219,7.50918,\numprint{161031},\numprint{178535},\numprint{262144},\numprint{262141},161031
,ssmc\_ship\_001,\numprint{34920},\numprint{34920},132.996,7.47016,7.76614,\numprint{4347},\numprint{14346},\numprint{34920},\numprint{34920},18.5822,26.1299,\numprint{4347},\numprint{1547},\numprint{34920},\numprint{34920},4347
,ssmc\_sls,\numprint{62729},\numprint{1748121},3.89235,13.7847,19.5895,\numprint{61865},\numprint{1561682},\numprint{62729},\numprint{1748122},2.00071,2.00054,\numprint{60640},\numprint{1362532},\numprint{62729},\numprint{1748122},60640
,ssmc\_xenon2,\numprint{157464},\numprint{157464},24.556,1.70785,1.84799,\numprint{51691},\numprint{132533},\numprint{157464},\numprint{157464},7.97687,7.89351,\numprint{51670},\numprint{58260},\numprint{157464},\numprint{157464},51670
,mean,\numprint{315399},\numprint{676599},18.1824856,5.235120324,6.756562956,\numprint{213376},\numprint{391249},\numprint{315400},\numprint{676600},8.741439,9.0848336,\numprint{195602},\numprint{302658},\numprint{315400},\numprint{676600},
\end{filecontents*}
\DTLloaddb[
       keys={opt,graph,n,m,e,tKf,tKs,rnKf,rmKf,nKf,mKf,reKf,reKs,rnKs,rmKs,nKs,mKs,best}
]{reductions}{data/table_data/reductions_rn.csv}

\begin{adjustbox}{max width=.78\textwidth, max totalheight=\textheight, keepaspectratio}
       \begin{tabular*}{\textwidth}{lrrrrrrrr}
              & \multicolumn{3}{c}{Original Graph}
              &                 & \multicolumn{4}{c}{\strong}                                                                                                                                                                                                                      \\
              \cmidrule{1-4} \cmidrule{6-9}
              & $n$             & $m$                         & $\bar e$
              &                 & $n_r$                       & $m_r$                                 & $\bar e_r$                                     & $t$                                                                                                                       \\
              \cmidrule{1-4} \cmidrule{6-9}

              \DTLforeach{reductions}
              {\graph=graph,\n=n,\m=m,\e=e, \rnKs=rnKs,\rmKs=rmKs,\reKs=reKs,\tKs=tKs}%
              {%
                     \DTLiflastrow{
                            \phantom{.}\\[-0.8em]
                            \cmidrule{1-4} \cmidrule{6-9}
                            \phantom{.}\\[-0.8em]
                     }{}

                     \graph & \n              & \m                          & \dtlround{\erounded}{\e}{2} \erounded
                     & \hspace*{0.2cm} & \rnKs                       & \rmKs                                 & \dtlround{\reKsrounded}{\reKs}{2} \reKsrounded & \dtlround{\tKsrounded}{\tKs}{2} \tKsrounded                                                                               \\
              }
       \end{tabular*}
\end{adjustbox}

\end{table}

\end{document}